\DeclareMathOperator{\tr}{tr}
\let\originalleft\left
\let\originalright\right
\renewcommand{\left}{\mathopen{}\mathclose\bgroup\originalleft}
\renewcommand{\right}{\aftergroup\egroup\originalright}
\newcommand{\bra}[1]{\left\langle #1 \right|}
\newcommand{\ket}[1]{\left| #1 \right\rangle}
\newcommand{\ketbra}[2]{\left|#1\middle\rangle\middle\langle#2\right|}
\newcommand{\proj}[1]{\left|#1\middle\rangle\middle\langle#1\right|}
\newcommand{\abs}[1]{\left|#1\right|}
\newcommand{\mean}[1]{\left\langle#1\right\rangle}
\newcommand{\de}[1]{\left(#1\right)}
\newcommand{\De}[1]{\left[#1\right]}
\newcommand{\mathand}{\quad\text{and}\quad}
\newcommand{\ie}{i.e.\@\xspace}
\newtheorem{theorem}{Theorem}
\newtheorem*{theorem*}{Theorem}
\newtheorem{lemma}[theorem]{Lemma}
\newtheorem{result}{Result}
\begin{document}
\title{Bell nonlocality with a single shot}
\author{Mateus Araújo}\orcid{0000-0003-0155-354X}
\affiliation{Institute for Quantum Optics and Quantum Information (IQOQI),
Austrian Academy of Sciences, Boltzmanngasse 3, 1090 Vienna, Austria}
\author{Flavien Hirsch}\orcid{0000-0002-4145-7299}
\affiliation{Institute for Quantum Optics and Quantum Information (IQOQI),
Austrian Academy of Sciences, Boltzmanngasse 3, 1090 Vienna, Austria}
\author{Marco Túlio Quintino}\orcid{0000-0003-1332-3477}
\affiliation{Vienna Center for Quantum Science and Technology (VCQ), Faculty of Physics,
 University of Vienna, Boltzmanngasse 5, 1090 Vienna, Austria}
\affiliation{Institute for Quantum Optics and Quantum Information (IQOQI),
Austrian Academy of Sciences, Boltzmanngasse 3, 1090 Vienna, Austria}
\affiliation{Department of Physics, Graduate School of Science, The University of Tokyo, Hongo 7-3-1, Bunkyo-ku, Tokyo 113-0033, Japan}

\date{10th March 2025}

\begin{abstract}
In order to reject the local hidden variables hypothesis, the usefulness of a Bell inequality can be quantified by how small a $p$-value it will give for a physical experiment. Here we show that to obtain a small expected $p$-value it is sufficient to have a large gap between the local and Tsirelson bounds of the Bell inequality, when it is formulated as a nonlocal game. We develop an algorithm for transforming an arbitrary Bell inequality into an equivalent nonlocal game with the largest possible gap, and show its results for the CGLMP and $I_{nn22}$ inequalities.

We present explicit examples of Bell inequalities with gap arbitrarily close to one, and show that this makes it possible to reject local hidden variables with arbitrarily small $p$-value in a single shot, without needing to collect statistics. We also develop an algorithm for calculating local bounds of general Bell inequalities which is significantly faster than the naïve approach, which may be of independent interest.
\end{abstract}
\maketitle

\begin{figure}[htb]
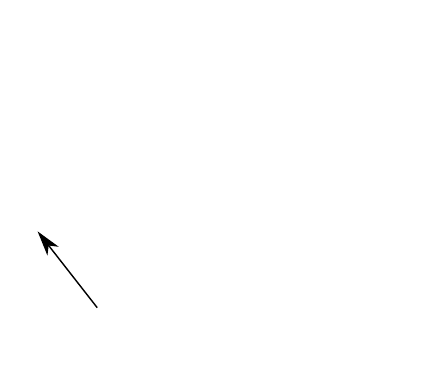
\caption{Schematic representation of a bipartite nonlocal game. A referee samples questions $x$ and $y$ with probability $\mu(x,y)$ and sends them to Alice and Bob. They send their answers $a$ and $b$ back to the referee, who accepts their answers with probability $V(a,b,x,y)$, in which case they win. If the maximal probability of winning the game with local hidden variables $\omega_\ell(G)$ is close to zero and the probability of winning it with the optimal quantum strategy $\omega_q(G)$ is close to one then this nonlocal game makes it possible to reject local hidden variables in a single round.}
\end{figure}

One of the most impactful developments of modern physics was the discovery of Bell's theorem \cite{bell1964}, that implies that the predictions of quantum mechanics cannot be reproduced by local and deterministic theories \cite{chsh1969} or, in a different formulation, that they cannot be reproduced by locally causal theories \cite{bell1975}. It forced us to rethink cherished notions of determinism and locality, and perhaps even the existence of many worlds \cite{deutsch2000,brown2016}.

The contradiction between quantum mechanics and local hidden variables (LHVs) provided by Bell's theorem is inherently probabilistic, as it is always possible, although unlikely, to obtain experimental results consistent with quantum mechanics merely by chance. This makes it a nontrivial problem to reject LHVs experimentally. Historically, the approach chosen was to estimate the value of the Bell expression by making a large number of measurements, calculate the variance of this estimator, and declare LHVs rejected if this estimate was some number of standard deviations above the local bound \cite{aspect1982,weihs1998,rowe2001}. This is not satisfactory, though, as it still leaves open the possibility that LHVs could reproduce these experimental results with high probability. In the recent loophole-free Bell tests \cite{hensen2015,giustina2015,shalm2015}, the approach taken was instead to calculate the probability that LHVs could reproduce the observed data, and declare them rejected if this $p$-value was below some threshold.

To obtain a stronger -- or easier -- rejection of LHVs, it is then interesting to search for the form of the Bell inequality that minimizes this $p$-value. The best way to address this question is to reformulate the Bell inequality as a nonlocal game \cite{cleve2004}, as the expected $p$-value decreases monotonically with the size of the gap of the nonlocal game, which is the difference between its Tsirelson bound and its local bound. We develop an algorithm to translate a Bell inequality into an equivalent nonlocal game with the largest possible gap, showing that finding such an optimal game reduces to solving a linear programming problem. We present analytical solutions for unique games and the CGLMP inequalities, and numerical solutions for the $I_{nn22}$ inequalities.

A crucial feature of nonlocal games is that they are played in rounds, and in each round the players either win or lose. This makes it possible to calculate the $p$-value of obtaining any number of victories in any number of rounds, and raises the natural question of what is the minimal number of rounds necessary to reject LHVs with a given $p$-value. Perhaps surprisingly, we show that it is possible to do so with a single measurement, for any chosen $p$-value, as had been speculated before in Refs.~\cite{barrett2002,gill2001}. To see that, note that the $p$-value of obtaining a single victory in a single round of the nonlocal game with LHVs is simply the local bound of game. If it is smaller than the desired $p$-value, obtaining such a victory is enough to reject LHVs. If the Tsirelson bound of the game is close to one, it is very likely that this will happen when playing the game with quantum mechanics.

We present two ways of constructing nonlocal games with the desired properties. The first is by using the parallel repetition technique, well-known in computer science, where we play $n$ instances of a nonlocal game in parallel, in a single round, instead of in $n$ consecutive rounds as it is usually done. As shown by Rao \cite{rao2011}, this will turn any nonlocal game with Tsirelson bound strictly larger than the local bound into a nonlocal game with local bound arbitrarily close to zero and Tsirelson bound arbitrarily close to one. The second construction uses the Khot-Vishnoi game; as shown by Kempe et al. \cite{kempe2007} there exists a choice of parameter for which its local bound is arbitrarily close to zero and its Tsirelson bound is arbitrarily close to one. In both cases a quantum state with unreasonably large dimension is required to obtain a single-shot rejection of LHVs.

This raises the question of what is the minimal quantum state dimension required to obtain a given gap. We show that the size of the gap is closely related to the so-called largest violation of a Bell inequality \cite{junge2010,junge2010b}, and this relationship allows us to derive upper bounds on the size of the gap as a function of the dimension. The gaps of the constructions from the previous paragraph are much smaller than the upper bounds we found, which raises the possibility of constructing a nonlocal game that makes possible a much easier single-shot rejection of LHVs.

The paper is organised as follows: Section \ref{sec:basic} introduces Bell inequalities and nonlocal games. Section \ref{sec:pvalue} discusses how to obtain the $p$-value for a given nonlocal game, and shows that nonlocal games with a large gap have small $p$-value. Section \ref{sec:optimal} presents the algorithm for transforming a Bell inequality into an equivalent nonlocal game with largest possible gap, and present its results for the CGLMP and $I_{nn22}$ inequalities. Section \ref{sec:singleshot} presents the nonlocal games that allow rejection of LHVs with a single shot. Section \ref{sec:asymptotics} presents the bounds on the size of the gap. Section \ref{sec:flavien} presents the algorithm for calculating local bounds.

\section{Bell inequalities and nonlocal games}\label{sec:basic}

There are two main approaches to the study of Bell nonlocality. In the physics literature it is common to use Bell inequalities. In a scenario where two non-communicating parties, Alice and Bob, produce outcomes $a$ and $b$ when given settings $x$ and $y$, a Bell inequality is the expression
\begin{equation}\label{eq:Bell}
\sum_{abxy}M^{ab}_{xy} p(ab|xy) \leq L
\end{equation} 
where $p(ab|xy)$ are conditional probabilities, $M^{ab}_{xy}$ are real coefficients and $L$ is the \textit{local bound}, which is the maximal value of the \textit{lhs} when the probabilities admit a LHV model. When instead the probabilities are obtained from quantum mechanics, the supremum of the \textit{lhs} is called the Tsirelson bound \cite{tsirelson80}.
Bell inequalities are often written instead in terms of \textit{correlators} $\mean{A_xB_y}:=p(a{=}b|xy)-p(a{\neq}b|xy)$ when the outcomes $a$ and $b$ can take only two possible values. The prototypical example is the CHSH inequality \cite{chsh1969}
\begin{multline}
\mean{A_0B_0} + \mean{A_0B_1} + \mean{A_1B_0} - \mean{A_1B_1} \le 2,
\end{multline}
which has a Tsirelson bound of $2\sqrt{2}$. For a more in-depth introduction to Bell inequalities see Ref.~\cite{brunner14}.

On the other hand, in the computer science literature it is common to use nonlocal games. Curiously, they were studied for a long time in relation to the complexity class MIP \cite{benor1988} before the connection to nonlocality was noticed \cite{cleve2004}. Again for the bipartite scenario, a nonlocal game is a cooperative game in which a referee sends questions $x,y$ sampled from a probability distribution $\mu(x,y)$ to two parties, Alice and Bob, which then provide answers $a,b$. The referee then accepts their answers with probability $V(a,b,x,y)$, and the parties win the game if the referee has accepted their answers. The local bound of a nonlocal game $G$ is denoted $\omega_\ell(G)$, and it is the maximal probability of winning the game with LHVs, and the Tsirelson bound, denoted $\omega_q(G)$, is the supremum of the probability of winning the game with quantum mechanics. The prototypical example is the CHSH game, introduced by Tsirelson \cite{tsirelson1997}. In it $a,b,x,y \in \{0,1\}$, $\mu(x,y) = 1/4$ and $V(a,b,x,y) = [a\oplus b = xy]$, where $[\cdot]$ are Iverson brackets, i.e., $[\Pi]=1$ if the proposition $\Pi$ is true and $0$ if the proposition $\Pi$ is false. Its local bound is $\omega_\ell(G_\text{CHSH}) = 3/4$ and its Tsirelson bound is $\omega_q(G_\text{CHSH}) = (2+\sqrt2)/4$.

There is a very important sense in which Bell inequalities and nonlocal games are completely equivalent: as remarked in Refs.~\cite{buhrman2010,palazuelos2016}, and as we explore in Section \ref{sec:optimal}, all Bell inequalities can be transformed into a nonlocal game without affecting their ability to detect nonlocality. As we show in Appendix \ref{app:deterministic}, this is not true if we restrict the predicate $V(a,b,x,y)$ to be deterministic, as is often done in the literature. As we show in Appendix \ref{app:deterministiclifting}, however, if we also allow lifting, i.e., embedding the Bell inequality in a scenario with more inputs, then it is possible to transform all two-outcome Bell inequalities into nonlocal games with deterministic predicate. This does not hold for three or more outcomes. This transformation does have a cost, however: as we show in Section \ref{sec:cglmp}, the optimal nonlocal games corresponding to the CGLMP inequalities must have probabilistic predicate, even though they can always be turned into an equivalent nonlocal game with deterministic predicate and smaller gap.

In a wider sense, though, there are relevant differences between these concepts. In fact, there are several advantages to using the nonlocal game formulation:
\begin{enumerate}
\item The local and Tsirelson bounds of a nonlocal game are physically meaningful, and more generally we refer always to the probability of winning the game, as opposed to obtaining some value in the left hand side of the Bell inequality. This is very convenient for statistical analysis, and makes comparison between different nonlocal games meaningful.
\item It immediately suggests a simple and powerful statistic to analyse the results of multiple rounds of playing: the number of victories. The Bell inequality formulation suggests, on the other hand, that we should estimate each individual term and sum these estimates. This not only makes it impractical to do experiments with Bell inequalities that have a large number of terms (which is the rule, not the exception), but this statistic is vulnerable to the memory loophole: as shown in Ref.~\cite{barrett2002}, it becomes possible for the LHVs to use knowledge of the past settings to slightly increase the value of the estimate. As we show in Section \ref{sec:pvalue}, the number of victories statistic is not vulnerable to the memory loophole.
\item It is much more pedagogical. Nonlocal games are easy to understand, and highlight essential features of Bell inequalities: that the questions $x,y$ must be random, that particular answers $a,b$ correspond to particular questions in particular rounds, and that the details of the experimental apparatus are irrelevant.
\end{enumerate}
It's also worth mentioning an advantage of the Bell inequality formulation: they have a direct geometrical interpretation as separating hyperplanes in the set of correlations, and are very convenient to use when searching for the facets of the local polytope, which correspond to the so-called tight Bell inequalities \cite{froissart1981}.

\section{$p$-values}\label{sec:pvalue}

When reporting the result of a Bell experiment, several authors write that they observed a violation of some number of standard deviations above the local bound, where this standard deviation refers to the variance of the experimental statistics \cite{aspect1982,weihs1998,rowe2001}. For the purpose of rejecting LHV models this is not relevant, as discussed in detail in Ref.~\cite{zhang2011}. One relevant figure of merit is the $p$-value, the probability of obtaining a result at least as extreme as the observed data assuming that the null hypothesis is true, in this case that the world is described by LHVs. Indeed the $p$-value was the figure of merit reported in the recent loophole-free Bell tests \cite{hensen2015,giustina2015,shalm2015}.

As shown in Ref.~\cite{elkouss2016}, the $p$-value of obtaining $v$ victories out of $n$ rounds, that is, the probability of obtaining $v$ or more victories out of $n$ rounds with LHVs, is given simply by the binomial distribution\footnote{Note that although the authors restricted the predicate $V(a,b,x,y)$ to be deterministic, their proof holds without change for the general case. The authors treated the general case by demanding the predicate to be deterministic but allowing it to attribute a \emph{score} to the players, instead of just a win or a loss. They could only prove a looser bound in this case, showing that this is a bad choice.}
\begin{equation}
p(G,v,n) := \sum_{k=v}^n \binom{n}{k} \omega_\ell(G)^k (1-\omega_\ell(G))^{n-k},
\end{equation}
even when taking into account the memory loophole. This is a proof that when using the number of victories in the nonlocal game as the statistic the memory loophole allows LHVs to play the game no better than simply playing them independently. The key idea behind this proof is Gill's observation that this statistic is a supermartingale \cite{gill2003}, as was argued informally in Ref.~\cite{barrett2002} and further developed in Refs.~\cite{zhang2011,zhang2013,bierhorst2014}.

When $G$ is played with the optimal quantum strategy, the expected number of victories is $n \omega_q(G)$, and the $p$-value for it is $p(G,\lceil n \omega_q(G) \rceil,n)$, where $\lceil \cdot \rceil$ is the ceiling function. As we show in Appendix \ref{app:pvaluebound},
\begin{equation}
p(G,\lceil n \omega_q(G) \rceil,n) \le (1-\chi_G)^{n\chi_G},
\end{equation}
where $\chi_G := \omega_q(G)-\omega_\ell(G)$ is the gap of the nonlocal game $G$. Note that the upper bound goes down monotonically with increasing gap for any $n$, and moreover it gets arbitrarily close to zero as the gap becomes arbitrarily close to one.

We want to emphasize that the gap $\chi_G$ is a good quantity to maximize if we are interested in a small $p$-value, because in the literature it is common to maximize instead the ratio $\omega_q(G)/\omega_\ell(G)$ (further explored in Section \ref{sec:asymptotics}), but having a large ratio does not imply having a small $p$-value. For example, in the Khot-Vishnoi game (explained in Section \ref{sec:khotvishnoi}) we can get
\begin{equation}
\omega_\ell(G_{\text{KV}_k}) \le \frac{e^2}{k}\mathand \omega_q(G_{\text{KV}_k}) \ge \frac{1}{\log^2(k)},
\end{equation}
for $k \ge 2^3$ and a power of two. It is easy to see that the ratio $\omega_q(G_{\text{KV}_k})/\omega_\ell(G_{\text{KV}_k})$ grows without bound with $k$, but the upper bound we present goes to one.

\section{Optimal nonlocal game for a Bell inequality}\label{sec:optimal}

In this section we derive the optimal nonlocal game corresponding to a Bell inequality, in the sense of maximizing the gap $\chi_G$. To start, let's define things more formally. A behaviour $P$ is a tensor consisting of conditional probabilities for some Bell scenario, $P^{ab}_{xy} := p(ab|xy)$. A Bell functional is defined by a tensor $M$, and the value of the Bell functional on a behaviour $P$ is given by
\begin{equation} \langle M,P \rangle = \sum_{abxy} M^{ab}_{xy} p(ab|xy).\end{equation}
The local bound $L(M)$ is defined as
\begin{equation} L(M) := \max_{P\in\mathcal L} \langle M,P \rangle, \end{equation}
where $\mathcal L$ is the set of local behaviours, and the Tsirelson bound $Q(M)$ is defined as
\begin{equation} Q(M) := \sup_{P\in\mathcal Q} \langle M,P \rangle, \end{equation}
where $\mathcal Q$ is the set of quantum behaviours. A Bell inequality is the expression $\langle M,P \rangle \le L(M)$.

A nonlocal game is defined analogously for a tensor $G$, with the additional restriction that
\begin{equation}\label{eq:Ggame}
G^{ab}_{xy} = \mu(x,y)V(a,b,x,y),
\end{equation}
where $\mu(x,y)$ is the probability that the referee sends questions $x,y$ to the players, and $V(a,b,x,y)$ is the probability that the referee accepts answers $a,b$ on questions $x,y$. If the predicate $V(a,b,x,y)$ is deterministic we call $G$ a deterministic game.

The following theorem derives the conditions for a Bell functional to be a nonlocal game:
\begin{theorem}\label{theo:bell=game}
A Bell functional $M$ is a nonlocal game if and only if it respects positivity
\begin{equation}\label{eq:gamepositivity}
M^{ab}_{xy} \ge 0\qquad\forall a,b,x,y
\end{equation}
and normalisation:
\begin{equation}\label{eq:gamenormalisation}
\sum_{xy} \max_{ab} M^{ab}_{xy} \le 1.
\end{equation}
\end{theorem}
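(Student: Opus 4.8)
The plan is to prove both implications directly from the definition of a nonlocal game in \eqref{eq:Ggame}, namely that $M$ is a nonlocal game precisely when there is a probability distribution $\mu(x,y)$ over the question pairs and a predicate $V(a,b,x,y)\in[0,1]$ with $M^{ab}_{xy}=\mu(x,y)V(a,b,x,y)$.

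For the ``only if'' direction I expect an immediate verification: if $M$ has this product form then \eqref{eq:gamepositivity} holds because $\mu\ge 0$ and $V\ge 0$, while $\max_{ab}M^{ab}_{xy}=\mu(x,y)\max_{ab}V(a,b,x,y)\le\mu(x,y)$ since $V\le 1$, so summing over $x,y$ gives $\sum_{xy}\max_{ab}M^{ab}_{xy}\le\sum_{xy}\mu(x,y)=1$, which is \eqref{eq:gamenormalisation}.

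The ``if'' direction carries the content, and I would establish it by an explicit construction. Writing $m_{xy}:=\max_{ab}M^{ab}_{xy}$, the hypotheses give $m_{xy}\ge 0$ and $\sum_{xy}m_{xy}\le 1$, so there is nonnegative slack $s:=1-\sum_{xy}m_{xy}$. I would set $\mu(x,y):=m_{xy}+s_{xy}$, where the $s_{xy}\ge 0$ are any weights summing to $s$, chosen so that $\mu$ has full support whenever $s>0$; then $\mu$ is a genuine probability distribution with $\mu(x,y)\ge m_{xy}$. Defining $V(a,b,x,y):=M^{ab}_{xy}/\mu(x,y)$ when $\mu(x,y)>0$ and, say, $V(a,b,x,y):=0$ otherwise, the bounds $0\le M^{ab}_{xy}\le m_{xy}\le\mu(x,y)$ force $V(a,b,x,y)\in[0,1]$, so $V$ is a valid predicate; and $\mu(x,y)V(a,b,x,y)=M^{ab}_{xy}$ holds where $\mu(x,y)>0$ and also where $\mu(x,y)=0$, since there $m_{xy}=0$ and hence $M^{ab}_{xy}=0$ by positivity. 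Thus $M=\mu V$ is a nonlocal game.

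The one point I would be careful about is the gap between the inequality $\le 1$ in \eqref{eq:gamenormalisation} and the equality $\sum_{xy}\mu(x,y)=1$ required of an honest referee distribution: this is exactly why the slack $s$ must be reabsorbed rather than ignored. I would note that reabsorbing it is harmless — the extra weight lands on question pairs where the predicate is simply taken to accept with probability below one (in the extreme, one may instead append a dummy question that is always lost), so it does not change the set of functionals realisable as games. Beyond this bookkeeping I anticipate no obstacle; every remaining step is a one-line check.
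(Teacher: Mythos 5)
Your proof is correct and follows essentially the same route as the paper: the necessity direction is the same one-line verification, and for sufficiency both arguments build a distribution $\mu$ dominating the per-setting maxima $\max_{ab}M^{ab}_{xy}$ and summing to one, then set $V=M/\mu$ (with the $0/0$ case handled by positivity forcing $M^{ab}_{xy}=0$ there). The only cosmetic difference is that you absorb the normalisation slack additively via the weights $s_{xy}$, whereas the paper rescales multiplicatively by dividing by $\sum_{x'y'}\max_{a'b'}M^{a'b'}_{x'y'}$; both are valid.
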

\begin{proof}
To show that the conditions are sufficient, assume that they are satisfied. Then we can define
\begin{equation}
\mu(x,y) := \frac{\max_{ab} M^{ab}_{xy}}{\sum_{x'y'} \max_{a'b'} M^{a'b'}_{x'y'}}
\end{equation}
and
\begin{equation}
V(a,b,x,y) := \frac{1}{\mu(x,y)}{M^{ab}_{xy}},
\end{equation}
where $0/0 = 0$, so that $\mu(x,y)$ will be in fact a probability distribution, as $\mu(x,y) \ge 0$ and $\sum_{xy} \mu(x,y) = 1$, and $V(a,b,x,y)$ a probabilistic predicate, as $0 \le V(a,b,x,y) \le 1$. Furthermore, we have that
\begin{equation}\label{eq:Mgame}
M^{ab}_{xy} = \mu(x,y)V(a,b,x,y),
\end{equation}
so we have sufficiency. To show that the conditions are necessary, assume that \eqref{eq:Mgame} holds. Then 
\begin{align}
\sum_{xy} \max_{ab} M^{ab}_{xy} &= \sum_{xy} \mu(x,y) \max_{ab} V(a,b,x,y) \\ 
				&\le \sum_{xy} \mu(x,y) = 1,
\end{align}
so normalisation is necessary. Since positivity is obvious, we are done.
\end{proof}

A nonlocal game $G$ is equivalent to a Bell functional $M$ if it can be obtained via transformations that have the same effect in any non-signalling behaviour. This is important because such transformations will not change which half-space the inequality $\langle M,P \rangle \le K$ defines on the set of non-signalling behaviours, and in particular if a Bell inequality $\langle M,P \rangle \le L(M)$ is a facet of the local polytope so will be the expression $\langle G,P \rangle \le \omega_\ell(G)$ for the corresponding nonlocal game.

As shown in Ref.~\cite{rosset14}, these transformations are
\begin{enumerate}
\item Adding a constant $c_{xy}$ to each outcome of each setting $x,y$, i.e., 
\begin{equation}\label{eq:translation}
\forall a,b,x,y\quad M^{ab}_{xy}\mapsto M^{ab}_{xy} + c_{xy}.
\end{equation}
This takes $K$ to $K + \sum_{xy}c_{xy}$.
\item Scaling $M$ by a positive constant $d$, i.e.,
\begin{equation}\label{eq:scaling}
M\mapsto d M.
\end{equation}
This takes $K$ to $d K$. Negative constants would require us to switch from upper bounds to lower bounds, so we exclude them for convenience.
\item Adding some no-signalling constraint to $M$. This leaves $K$ invariant.
\end{enumerate}

The next theorem shows the optimal nonlocal game that can be obtained from a Bell functional by considering only the first two transformations:
\begin{theorem}\label{thm:optimalgame}
The nonlocal game with the largest gap $\chi_G$ that can be obtained from a Bell functional $M$ via translation \eqref{eq:translation} and scaling \eqref{eq:scaling} is given by
\begin{equation}
G^{ab}_{xy} := \frac1{\beta+\alpha}\de{M^{ab}_{xy} +\alpha_{xy}},
\end{equation}
Where
\begin{equation}\label{eq:optimalalpha}
\alpha_{xy} := -\min_{a,b} M^{ab}_{xy}; \quad \alpha := \sum_{xy} \alpha_{xy},
\end{equation}
and
\begin{equation}\label{eq:optimalbeta}
\beta_{xy} := \max_{a,b} M^{ab}_{xy}; \quad \beta := \sum_{xy} \beta_{xy}.
\end{equation}
Its gap is given by
\begin{equation}
\chi_G = \frac{Q(M)-L(M)}{\beta+\alpha}
\end{equation}
\end{theorem}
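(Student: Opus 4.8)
The plan is to parametrise all functionals reachable from $M$ by the allowed transformations, write the gap as a function of those parameters, and then carry out an elementary optimisation. A general composition of translation \eqref{eq:translation} and scaling \eqref{eq:scaling} sends $M$ to $G^{ab}_{xy} = d\,(M^{ab}_{xy} + c_{xy})$ for some $d>0$ and real numbers $c_{xy}$. By the bookkeeping rules recorded in the excerpt, the local and Tsirelson bounds transform as $\omega_\ell(G) = d\,(L(M) + \sum_{xy} c_{xy})$ and $\omega_q(G) = d\,(Q(M) + \sum_{xy} c_{xy})$, so that the gap is $\chi_G = \omega_q(G)-\omega_\ell(G) = d\,(Q(M)-L(M))$. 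The key observation is that the constant shift $\sum_{xy} c_{xy}$ cancels out of the gap entirely; since $\mathcal L \subseteq \mathcal Q$ forces $Q(M)-L(M)\ge 0$, maximising $\chi_G$ is exactly the same as maximising the scale factor $d$.

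Next I would impose, via Theorem \ref{theo:bell=game}, that $G$ genuinely is a nonlocal game. Positivity \eqref{eq:gamepositivity} reads $d\,(M^{ab}_{xy}+c_{xy})\ge 0$ for all $a,b,x,y$; as $d>0$ this is equivalent to $c_{xy}\ge -\min_{ab}M^{ab}_{xy} = \alpha_{xy}$ for every pair $x,y$. For the normalisation \eqref{eq:gamenormalisation}, since $c_{xy}$ does not depend on $a,b$ we have $\max_{ab}G^{ab}_{xy} = d\,(\beta_{xy}+c_{xy})$ with $\beta_{xy}=\max_{ab}M^{ab}_{xy}$, so the constraint becomes $d\,(\beta+\sum_{xy}c_{xy})\le 1$.

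Finally I would combine these. The normalisation constraint gives $d \le 1/(\beta+\sum_{xy}c_{xy})$, and the denominator is strictly positive for any nontrivial $M$ because $\beta_{xy}+\alpha_{xy} = \max_{ab}M^{ab}_{xy}-\min_{ab}M^{ab}_{xy}\ge 0$ termwise. Hence $d$ is largest precisely when $\sum_{xy}c_{xy}$ is as small as possible; but positivity forces $c_{xy}\ge\alpha_{xy}$ term by term, so the smallest admissible value of $\sum_{xy}c_{xy}$ is $\alpha$, attained at $c_{xy}=\alpha_{xy}$. This yields $d=1/(\beta+\alpha)$, hence exactly the claimed $G^{ab}_{xy}$ and $\chi_G=(Q(M)-L(M))/(\beta+\alpha)$.

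I do not expect a genuine obstacle here; once the parametrisation is set up, it is a one-variable linear optimisation. The points that merit care are: (i) noting that the hypothesis deliberately restricts to transformations \eqref{eq:translation} and \eqref{eq:scaling}, so the no-signalling transformation plays no role and need not be analysed; (ii) dispatching the degenerate case $\beta+\alpha=0$, which forces $M^{ab}_{xy}$ to be independent of $a,b$ and hence $Q(M)=L(M)$, a trivial functional; and (iii) confirming $Q(M)\ge L(M)$, so that pushing $d$ to its maximum really does maximise the gap rather than make it more negative.
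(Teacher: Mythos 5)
Your proposal is correct and follows essentially the same route as the paper: both parametrise the reachable functionals by a scale and a per-setting shift, show via Theorem \ref{theo:bell=game} that positivity forces the shift to be at least $\alpha_{xy}$ and normalisation caps the scale at the reciprocal of $\beta$ plus the total shift, and then maximise the scale (to which the gap is proportional) — your $(d,c_{xy})$ coordinates are just a change of variables from the paper's $(\beta',\alpha'_{xy})$. Your explicit treatment of the degenerate case $\beta+\alpha=0$ and the remark that $Q(M)\ge L(M)$ are minor additions the paper leaves implicit.
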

\begin{proof}
First note that the result of the most general translation and scaling of a Bell functional $M$ can be written as
\begin{equation}\label{eq:generaltransformation}
{G^{ab}_{xy}}' := \frac1{\beta+\beta'+\alpha+\alpha'}\de{M^{ab}_{xy} +\alpha_{xy}+\alpha_{xy}'},
\end{equation}
for $\alpha_{xy},\beta_{xy}$ defined as in the statement of the theorem, and some arbitrary constants $\alpha_{xy}'$ and $\beta'$, where again $\alpha' := \sum_{xy} \alpha'_{xy}$.

Moreover, ${G^{ab}_{xy}}'$ is a valid nonlocal game, respecting positivity \eqref{eq:gamepositivity} and normalisation \eqref{eq:gamenormalisation}, if and only if $\alpha_{xy}' \ge 0$ and $\beta' \ge 0$. To see that, note that
\begin{equation}
\forall x,y\quad \min_{ab} \de{M^{ab}_{xy} +\alpha_{xy}+\alpha_{xy}'} = \alpha_{xy}'.
\end{equation}
Since the denominator in equation \eqref{eq:generaltransformation} is required to be positive, we have that ${G^{ab}_{xy}}' \ge 0$ iff $\alpha_{xy}' \ge 0$. Furthermore,
\begin{equation}
\sum_{xy} \max_{ab}{G^{ab}_{xy}}' = \frac1{\beta+\beta'+\alpha+\alpha'}(\beta+\alpha+\alpha'),
\end{equation}
so $\sum_{xy} \max_{ab}{G^{ab}_{xy}}' \le 1$ iff $\beta' \ge 0$.

Both the local and Tsirelson bounds transform as $K \mapsto \frac1{\beta+\beta'+\alpha+\alpha'}\de{K+\alpha+\alpha'}$, so the gap of $G'$ is given by
\begin{equation}
\chi_{G'} = \frac{Q(M)-L(M)}{\beta+\beta'+\alpha+\alpha'},
\end{equation}
and it is maximized by setting $\beta'=\alpha'=0$.
\end{proof}

Note that for this optimal game the \emph{signalling} bound, the highest probability of success attainable with an arbitrary behaviour, is always equal to $1$. This is not always the case for the \emph{non-signalling} bound, the highest probability of success attainable with a non-signalling behaviour.

We now turn our attention to the remaining transformation, adding no-signalling constraints. Since for any $M$ the optimal $\beta$ and $\alpha$ will be given by equations \eqref{eq:optimalalpha} and \eqref{eq:optimalbeta}, our goal is to minimize the sum $\beta+\alpha$ given by these equations over the no-signalling constraints\footnote{In Ref.~\cite{renou2017} the authors also optimized Bell inequalities over the no-signalling constraints, but with the goal of reducing the variance of the quantum statistics, which as we argued before is irrelevant for the purpose of rejecting LHVs.}, in order to maximize the gap $\chi_G$.

The no-signalling constraints are equations of the form
\begin{equation}\label{eq:nsexample}
p(00|00)+p(01|00)-p(00|01)-p(01|01) = 0,
\end{equation}
meaning that the marginal probability that Alice obtains result $0$ does not depend on whether Bob's input is $0$ or $1$. Since these are satisfied by all non-signalling behaviours (by definition), it means that we can add the corresponding coefficients to $M$, scaled by any constant, without changing its effect on non-signalling behaviours. For example, adding equation \eqref{eq:nsexample} times a constant $\gamma$ means transforming $M$ as
\begin{equation} 
M^{ab}_{xy} \mapsto M^{ab}_{xy} + \gamma(\delta_{a0}\delta_{x0}\delta_{y0}-\delta_{a0}\delta_{x0}\delta_{y1}).
\end{equation}
Let then $\{S_i\}_{i=1}^N$ be the set of no-signalling constraints for the corresponding Bell scenario. In the scenario with $s_A,s_B$ inputs and $k_A,k_B$ outputs for Alice and Bob there are $N=(k_A-1)s_A(s_B-1) + (k_B-1)s_B(s_A-1)$ independent ones, where independent means linearly independent and inequivalent under translation \eqref{eq:translation}. They can be chosen as a basis for the signalling vector space defined in Ref.~\cite{rosset14}. Defining then
\begin{equation}
M' := M + \sum_{i=1}^N \gamma_i S_i,
\end{equation}
the problem of maximizing the gap for the nonlocal game obtained from $M$ reduces to solving
\begin{equation}\label{eq:lpoptimalgap}
\min_{\gamma_i} \de{\beta+\alpha} = \min_{\gamma_i} \sum_{xy} \De{\max_{ab} \de{{M^{ab}_{xy}}'} - \min_{ab} \de{{M^{ab}_{xy}}'}}.
\end{equation}
This is a linear programming problem, that can be solved efficiently by numerical methods. We provide an implementation of this algorithm in Python using CVXPY \cite{diamond2016,agrawal2018} as an ancillary file.

We managed to find an analytic solution for two special cases: for the CGLMP inequalities and for when the nonlocal game obtained from $M$ via Theorem \ref{thm:optimalgame} is a unique game, that is, when the predicate is of the form $V(a,b,x,y) = [a=\sigma_{xy}(b)]$ for some permutations $\sigma_{xy}$. Note that XOR games are a particular case of unique games. The solution for the CGLMP inequalities is presented in Section \ref{sec:cglmp}, and for unique games it is to do nothing: when $M$ corresponds to a unique game it is already optimal. In both cases the solution is unique: any change in the amount of no-signalling constraints decreases the gap. We prove this in Appendix \ref{app:uniqueness}.

In both these cases the Bell functional of the optimal solution was orthogonal to the signalling vector space. In general, though, this is not the case, as we found for the $I_{nn22}$ inequalities for $n \ge 3$ (explored below). Moreover, the solution of the linear programming problem $g := \beta+\alpha$ does not in general uniquely determine the local and Tsirelson bounds of the optimal nonlocal game, as they are given by
\begin{equation}
\omega_\ell(G) = \frac1{g}(L(M)+\alpha)\mathand \omega_q(G) = \frac1{g}(Q(M)+\alpha),
\end{equation}
and it is sometimes possible to change $\alpha$ while keeping $g$ constant. We shall see this in the example of the $I_{4422}$ inequality below. 

This raises the question of how to choose $\alpha$ in general. We choose to maximize $\alpha$, as we want the probability of winning the game with quantum mechanics to be as high as possible. We thus need to solve another linear program, maximizing $\alpha$ with the additional constraint that $\beta+\alpha=g$.

This will finally give us optimal, unique, and physically meaningful local and Tsirelson bounds for a given Bell functional. We propose that the bounds so obtained should be taken as \emph{the} local and Tsirelson bounds for the Bell functional.


To illustrate the method and examine some properties of the resulting nonlocal games, we applied it to the CGLMP and $I_{nn22}$ inequalities.

\subsection{CGLMP inequalities}\label{sec:cglmp}

The CGLMP inequalities, introduced in Ref.~\cite{collins2002}, are a family of bipartite Bell inequalities with two inputs per party labelled as $0$ and $1$, and $k \ge 2$ outputs per party labelled from $0$ to $k-1$. They reduce to the CHSH inequality for $k=2$, and are tight for all $k$. Using the form of the inequalities presented in Ref.~\cite{acin2005}, it is easy to see that the transformations in Theorem \ref{thm:optimalgame} take them to the nonlocal game with probabilistic predicate $G_{\text{CGLMP}_k}$ where the probability distribution of the inputs $x,y$ is the uniform one, $\mu(x,y) = 1/4$, and the predicate is
\begin{multline}\label{eq:predicatecglmp}
V(a,b,x,y) = \\ \sum_{i=0}^{k-2} \de{1-\frac{i}{k-1}}\De{a-b = (-1)^{x\oplus y}i + xy \mod k},
\end{multline}
where $[\cdot]$ are Iverson brackets, and $\de{1-\frac{i}{k-1}}$ is the probability that the referee accepts the answers if the corresponding condition is met.

As proven in Appendix \ref{app:uniqueness}, this predicate is the optimal and unique solution of the linear programming problem \eqref{eq:lpoptimalgap}, meaning that it gives the optimal gap for all $k$. Moreover, since the solution is unique, it is not possible to turn $G_{\text{CGLMP}_k}$ into a nonlocal game with deterministic predicate without decreasing the gap. If one accepts the reduction of the gap, it is possible to do so for every $k$. For example, it can be turned into an equivalent game with the deterministic predicate\footnote{To obtain this form, multiply $G_{\text{CGLMP}_k}$ by $4(k-1)$, add to it the no-signalling constraints $p^A(k|x0)-p^A(k|x1)=0$ and $p^B(k|0y)-p^B(k|1y)=0$ with coefficients $\gamma^A(k,x) = (-1)^xk$ and $\gamma^B(k,y) = (-1)^{y+1}k$, and pass it through Theorem \ref{thm:optimalgame}.}
\begin{equation}
V(a,b,x,y) = \De{ (a-b)(-1)^{x\oplus y} \ge xy },
\end{equation}
and the same probability distribution over the inputs, $\mu(x,y) = 1/4$. The gap gets multiplied by $(k-1)/k$, however. This game is equivalent to the simplified CGLMP inequality found in Ref.~\cite{zohren2008}.

Going back to the optimal game, its local bound is
\begin{equation}
\omega_\ell(G_{\text{CGLMP}_k}) = 3/4
\end{equation}
for all $k$. Its Tsirelson bound is not known exactly, but quantum states and measurements are known up to $k=8$ that match the upper bounds given by the second level of the NPA hierarchy within numerical precision \cite{acin2002,navascues2007}. They are summarized in the following table:
\begin{center}
\begin{tabular}{ c | c | c | c}
  $k$ & $\omega_\ell(G_{\text{CGLMP}_k})$ & $\omega_q(G_{\text{CGLMP}_k})$ & $\chi_{G_{\text{CGLMP}_k}}$ \\ \hline
  2 & 0.7500 & 0.8536 & 0.1036\\
  3 & 0.7500 & 0.8644 & 0.1144 \\
  4 & 0.7500 & 0.8716 & 0.1216 \\
  5 & 0.7500 & 0.8770 & 0.1270 \\
  6 & 0.7500 & 0.8812 & 0.1312 \\
  7 & 0.7500 & 0.8847 & 0.1347 \\
  8 & 0.7500 & 0.8877 & 0.1377 \\
\end{tabular}
\end{center}

It has been shown that $\lim_{k \to \infty} \omega_q(G_{\text{CGLMP}_k}) = 1$ \cite{zohren2010}.

\subsection{$I_{nn22}$ inequalities}\label{sec:inn22}

The $I_{nn22}$ inequalities, introduced in Ref.~\cite{collins2004}, are a family of bipartite Bell inequalities with $n\ge 2$ inputs and $2$ outputs per party. They reduce to the CHSH inequality for $n=2$, and are tight for all $n$ \cite{avis2007}. Their Tsirelson bounds are in general not known, and for the $I_{3322}$ inequality might even require infinite-dimensional quantum systems to be reached \cite{pal2010}.

We ran our algorithm to generate an optimal nonlocal game $G_{nn22}$ up to $n=8$. The Tsirelson bounds were upperbounded by the level 2 of the NPA hierarchy. The results are summarized in the following table:
\begin{center}
\begin{tabular}{ c | c | c | c}
  $n$ & $\omega_\ell(G_{nn22})$ & $\omega_q(G_{nn22})$ & $\chi_{G_{nn22}}$ \\ \hline
  2 & 0.7500 & 0.8536 & 0.1036\\
  3 & 0.8000 & 0.8502 & 0.0502 \\
  4 & 0.8333 & 0.8654 & 0.0321 \\
  5 & 0.8571 & 0.8795 & 0.0224 \\
  6 & 0.8750 & 0.8917 & 0.0167 \\
  7 & 0.8889 & 0.9020 & 0.0131 \\
  8 & 0.9000 & 0.9107 & 0.0107 \\
\end{tabular}
\end{center}
Note that for these cases the local bound is equal to 
\begin{equation}
\omega_\ell(G_{nn22}) = \frac{n+1}{n+2}
\end{equation}
within numerical precision.

One interesting feature of the solutions is that for $n\ge 3$ it always required the nonlocal game to have a nonzero projection in the signalling vector space. That is, if one simply takes the unique form from Ref.~\cite{rosset14}, that has zero projection, and calculates the optimal game according to Theorem \ref{thm:optimalgame}, one gets a gap that is smaller than the optimal one. For example\footnote{The Bell functional with zero projection for $n=3$ is available in \href{http://faacets.com/db/canonical/4}{http://faacets.com/db/canonical/4}}, for $n=3$ the gap so obtained is $0.0471$.

Another interesting feature is that for $n \ge 4$ solving the linear program for the optimal gap did not uniquely determine the local and Tsirelson bounds. For example, for $n=4$ it was possible to reduce the local bound up to $0.7778$ without changing the gap. One consequence of this fact is that the nonlocal game with optimal gap does not always have non-signalling bound equal to 1, in this case it was $0.9444$.

The nonlocal game returned by the linear program did not have a particularly simple form, but for the case $n=3$ we managed to use the no-signalling constraints to take it to a nice form without reducing the gap. The tensor is given by
\begin{equation}
G_{3322} = \frac1{10}\left(\begin{array}{rr|rr|rr}
 0 &  2 &  0 &  1 &  0 &  1 \\ 
 2 &  2 &  1 &  0 &  1 &  0 \\\hline
 0 &  1 &  0 &  2 &  1 &  0 \\
 1 &  0 &  2 &  2 &  0 &  1 \\\hline
 0 &  1 &  1 &  0 &  0 &  0 \\ 
 1 &  0 &  0 &  1 &  0 &  0
\end{array}\right),
\end{equation}
where an element $G^{ab}_{xy}$ is written as the element $(a,b)$ of the $2\times2$ submatrix at coordinate $(x,y)$. 

It represents the following nonlocal game: if Alice and Bob are given inputs $(0,1)$, $(1,0)$, $(0,2)$, or $(2,0)$, which happens with probability $1/10$ each, they have to give different answers. If they are given instead inputs $(1,2)$ or $(2,1)$, which again happens with probability $1/10$, they have to give equal answers. The last case is when they get inputs $(0,0)$ or $(1,1)$, which happens with probability $1/5$ each. In this case they have to answer anything other than $(0,0)$. This can be written as a Bell inequality
\begin{multline}
\frac1{10}[ p(a{\neq}b|01) + p(a{\neq}b|10) +p(a{\neq}b|02) \\ +p(a{\neq}b|20) +p(a{=}b|12)+p(a{=}b|21) \\ + 2\de{p(\neg 00|00)+p(\neg00|11)} ] \le \frac45
\end{multline}

\subsection{Diviánszky-Bene-Vértesi inequality}

It is also interesting to consider the Bell inequality introduced by Diviánszky, Bene, and Vértesi in Ref.~\cite{divianszky2017}, that gives the best known lower bound on the real Grothendieck constant of order three, $K_\mathbb{R}(3) \ge 1.4359$. It is a bipartite Bell inequality with 90 inputs and 2 outputs per party, with local bound $L(M_\text{DBV}) = 324\,230\,014$ and Tsirelson bound $Q(M_\text{DBV}) \ge 465\,590\,111$. It is a \emph{full-correlation} Bell inequality, corresponding to a XOR game.

Since XOR games are a particular case of unique games, Appendix \ref{app:uniqueness} implies that the optimal nonlocal game corresponding to it is simply the one given by Theorem \ref{thm:optimalgame}. It has local bound
\begin{equation}
\omega_\ell(G_\text{DBV}) = \frac{718334945}{1112439876} \approx 0.6457
\end{equation}
and Tsirelson bound
\begin{equation}
\omega_q(G_\text{DBV}) \ge \frac{789014993}{1112439876} \approx 0.7093,
\end{equation}
so although the ratio $Q(M_\text{DBV})/L(M_\text{DBV})$ is close to $K_\mathbb{R}(3)$, the maximum possible for a maximally entangled state in dimension 2, the gap is even smaller than the one of the CHSH game.

\section{Violating a Bell inequality with a single shot}\label{sec:singleshot}

If you have played a single round of a nonlocal game $G$ and won, the $p$-value of that victory is simply the local bound of the game $\omega_\ell(G)$. The probability of obtaining this victory when playing with the optimal quantum strategy is its Tsirelson bound $\omega_q(G)$. Since
\begin{equation}
\omega_\ell(G) \le 1 - \chi_G\quad\text{and}\quad\omega_q(G) \ge \chi_G,
\end{equation}
it is enough to construct a family of nonlocal games with gap $\chi_G$ arbitrarily close to one in order to get a single-shot rejection of LHVs for any chosen $p$-value. We describe here two ways of obtaining such a gap: via parallel repetition and via the Khot-Vishnoi game.

\subsection{Parallel repetition}\label{sec:parallelrepetition}

An obvious thing to try is to play $n$ instances of a nonlocal game in parallel, as a single round, instead of playing them in $n$ consecutive rounds as it is usually done. We would expect that, analogous to the consecutive case, the probability of winning with LHVs a fraction of games $\delta$ more than the fraction expected from its local bound decays exponentially with $n$, and the probability of winning such a fraction quantumly goes exponentially to one. 

It is not that simple, though, because in this scenario the LHV model has access to all inputs of each party simultaneously, and this does make it more powerful in general. As shown in \cite{barrett2002}, the probability of winning two parallel CHSH games with LHVs is 10/16, strictly higher than the $\left(3/4\right)^2=9/16$ one gets by playing them independently. The problem can get even more extreme: for the Fortnow-Feige-Lovász game the probability of winning two parallel instances is the same as the probability of winning a single instance, 2/3 \cite{fortnow1989,feige1992}, although for three instances the probability does decrease to 14/27.

Nevertheless, the idea still works, because the probability of winning such a fraction of games still goes down exponentially with $n$, as proven by Rao's concentration bound \cite{rao2011}. To state it, let us define the parallel game more formally. Consider you have a nonlocal game $G$ with $s$ inputs and $k$ outputs per party, who win with probability\footnote{Although Rao stated his theorem only for games with deterministic predicate, it applies without change to games with probabilistic predicate.} $V(a,b,x,y)$ if they give outputs $a,b$ for inputs $x,y$. Its parallel version is then the game $G^n_\delta$ where they play $n$ copies of $G$ in parallel, and they win at $G^n_\delta$ if they win $\lceil n(\omega_\ell(G)+\delta)\rceil$ or more instances of $G$. The concentration bound is then\footnote{This expression is obtained by using an intermediate expression in Rao's proof and setting the constant $25\gamma$ to be $2\delta/3$ instead of $\delta/4$. We have observed numerically that this choice leads to a tighter bound.}
\begin{equation}\label{eq:concentrationbound}
\omega_\ell(G^n_\delta) \le 2 \exp\de{-n t \log\de{\frac{\omega_\ell(G) + \delta - t}{\omega_\ell(G) + 2\delta/3}}}
\end{equation}
where
\begin{equation}
t := \frac{4\delta^2}{4\delta^2+75^2\lceil \log_2k \rceil +75^2\log_2\De{1/(\omega_\ell(G) + 2\delta/3)}}.
\end{equation}

While computing the Tsirelson bound of $G^n_\delta$ is also difficult, we can obtain a good enough lower bound by playing each instance of $G$ independently with the optimal quantum strategy. The probability of winning $\lceil n(\omega_\ell(G)+\delta)\rceil$ or more instances of $G$ for $\delta < \omega_q(G)-\omega_\ell(G)$ is then lowerbounded by the Chernoff bound
\begin{equation}\label{eq:quantumchernoff}
\omega_q(G^n_\delta) \ge 1 - \exp\de{-n D(\omega_\ell(G)+\delta||\omega_q(G)},
\end{equation}
which does go exponentially to one, as we want.

One might consider the possibility of simplifying the discussion by considering nonlocal games for which the Tsirelson bound is one, known as quantum pseudo-telepathy games \cite{brassard2005}, of which a good example is the magic square game \cite{cabello2001,aravind2002}. As noticed in Ref.\,\cite{kempe2007}, they give us an easy way to construct a nonlocal game with gap arbitrarily close to one: we simply demand the players to win \emph{all} parallel instances, as in the ideal case this happens with probability one using quantum mechanics. The $p$-value of such an event is given by Raz's parallel repetition theorem \cite{raz1998,holenstein2006}, which also gives a tighter bound than Rao's concentration bound. It is, however, completely unrealistic to demand a real experiment to win all parallel instances, as it leaves no room for experimental error. Using instead the concentration bound we get a result that is robust against experimental imperfections, and as an added bonus it applies to \emph{any} nonlocal game, not only pseudo-telepathy ones.

Ironically enough, the nonlocal game for which the concentration bound gave the smallest upper bound we found was in fact a pseudo-telepathy game, consisting of two parallel repetitions of the magic square game. In the case of a single repetition, the magic square game has 3 inputs and 4 outputs per player, local bound 8/9, and requires two singlets to be won with probability 1. For the case of two repetitions we showed that local bound is
\begin{equation}
\omega_\ell(G_\text{MS2}) = \frac{66}{81},
\end{equation}
using the algorithm from section \ref{sec:flavien}. Setting $\delta = 1 - \frac{66}{81}-\frac1{100}$, to allow the players to lose $\frac1{100}$ of the games, we find that for a $p$-value of $10^{-5}$ it is sufficient to play
\begin{equation}
n_\text{MS2} = 32\,654\,296
\end{equation}
parallel copies of $G_\text{MS2}$. For the CHSH game $G_\text{CHSH}$, setting $\delta = \frac{2+\sqrt2}4 - \frac34 - \frac1{100}$, and again for a $p$-value of $10^{-5}$, we find that
\begin{equation}
n_\text{CHSH} = 67\,683\,296
\end{equation}
parallel copies are enough. The probability of winning this many instances quantumly is extremely close to one.

It might seem that it is easier to achieve a single-shot violation with $G_\text{MS2}$ than with $G_\text{CHSH}$, but looking only at the number of repetitions is misleading, as we need 4 singlets to play each instance of $G_\text{MS2}$, but only 1 singlet for each instance of $G_\text{CHSH}$. A more meaningful measure of the experimental effort is the dimension of the quantum system required to achieve the single shot violation, which is
\begin{equation}
d_\text{MS2} = 2^{130\,617\,184}
\end{equation}
and
\begin{equation}
d_\text{CHSH} = 2^{67\,683\,296},
\end{equation}
so $G_\text{CHSH}$ is actually better. We also considered parallel repetitions of $G_{\text{CGLMP}_k}$ and $G_{nn22}$, described in Sections \ref{sec:cglmp} and \ref{sec:inn22}, but they always required larger dimensions.

We do not expect these numbers to be close to the true dimension required for a single-shot violation, because the concentration bound is extremely loose. For instance, for small $n$ it gives us a bound very close to 2. 

To have a better idea on what the minimal required dimension is, we investigated the actual local bounds for $G_{\text{CHSH},\delta}^n$ and $G_{\text{MS},\delta}^n$ for the same $\delta$ as before (which requires the players anyway to win all parallel instances for up to 6 instances of the CHSH game and 8 instances of the magic square game). It is known that $\omega_\ell(G_{\text{CHSH},\delta}^1) = 3/4$, $\omega_\ell(G_{\text{CHSH},\delta}^2)=10/16$, and with the algorithm from section \ref{sec:flavien} we calculated that $\omega_\ell(G_{\text{CHSH},\delta}^3)=31/64$. Moreover, using a classical version of the see-saw algorithm, for which we provide an implementation in C as an ancillary file, the best lower bounds we could find for $n=4,5,6$ are $\omega_\ell(G_{\text{CHSH},\delta}^4) \ge 100/256$, $\omega_\ell(G_{\text{CHSH},\delta}^5) \ge 310/1024$, and $\omega_\ell(G_{\text{CHSH},\delta}^6) \ge 1000/4096$. These lower bounds are achievable by using trivial combinations of the optimal strategies for $G_{\text{CHSH},\delta}^2$ and $G_{\text{CHSH},\delta}^3$. A similar phenomenon happened for the Magic Square. It is known that $\omega_\ell(G_{\text{MS},\delta}^1) = 8/9$, we could show that $\omega_\ell(G_{\text{MS},\delta}^2) = 66/81$, and for $n=3,4$ the best lower bounds we found were $\omega_\ell(G_{\text{MS},\delta}^3) \ge 528/729$, $\omega_\ell(G_{\text{MS},\delta}^4) \ge 4356/6561$, again achievable by trivially combining the optimal strategies for lower $n$. If indeed no new strategies appear, it would be true that $\omega_\ell(G_{\text{MS}}^n) \le (\sqrt{66}/9)^n$ (where here we are requiring the players to win all instances), and a mere 113 parallel repetitions of $G_{\text{MS}}$ would be enough for a single shot violation with $p$-value $10^{-5}$.

The main results of this section can be summarised as follows:
\begin{result}
For any nonlocal game $G$ with a quantum violation, it is possible to obtain a single-shot violation for any desired $p$-value $p>0$ with any quantum probability of success $q<1$, by constructing the parallel game $G^n_\delta$ with $n,\delta$ chosen so that the concentration bound \eqref{eq:concentrationbound} implies $\omega_\ell(G^n_\delta) \le p$ and the Chernoff bound \eqref{eq:quantumchernoff} implies $\omega_q(G^n_\delta) \ge q$. Moreover, this single-shot violation is robust against experimental imperfections.
\end{result}

\subsection{The Khot-Vishnoi game}\label{sec:khotvishnoi}

Parallel repetition is not the only way of obtaining a nonlocal game with a gap arbitrarily close to one. As already outlined in Ref.~\cite{kempe2007}, it is also possible to do that with the Khot-Vishnoi game. This nonlocal game was introduced in Ref.~\cite{kempe2007}, based on a construction by Khot and Vishnoi \cite{khot2005}. We present here its formulation from Ref.~\cite{buhrman2010}.

The game $G_{\text{KV}_k}$ is defined by a integer $k \ge 2$, restricted to be a power of 2, and a noise parameter $\eta \in [0,1/2]$. It is a bipartite game, with $2^k/k$ inputs and $k$ outputs per party. As shown in Ref.~\cite{buhrman2010}, its local and Tsirelson bounds respect
\begin{equation}
\omega_\ell(G_{\text{KV}_k}) \le k^{-\frac{\eta}{1-\eta}} \mathand \omega_q(G_{\text{KV}_k}) \ge (1-2\eta)^2,
\end{equation}
where the lower bound on the Tsirelson bound is achieved by a quantum system of local dimension $k$.

In Refs.~\cite{buhrman2010,palazuelos2012}, for example, the parameter $\eta$ is chosen to be close to $\frac12-\frac1{2\log(k)}$, in order to get a large ratio $\omega_q(G_{\text{KV}_k})/\omega_\ell(G_{\text{KV}_k})$. This choice results in the bounds shown in Section \ref{sec:pvalue}, which have a very small gap $\omega_q(G_{\text{KV}_k}) - \omega_\ell(G_{\text{KV}_k})$. Optimizing instead for a large gap, we choose
\begin{equation}
\eta := \frac{\log(\log(\sqrt[4]{k}))}{\log(k)},
\end{equation}
valid for $k \ge 2^6$, which is the same minimal $k$ for which the Khot-Vishnoi game has a quantum advantage.

With this choice of $\eta$ we get
\begin{equation}
\omega_\ell(G_{\text{KV}_k}) \le \frac1{\log(\sqrt[4]{k})}
\end{equation}
and
\begin{equation}
\omega_q(G_{\text{KV}_k}) \ge 1 - \frac{\log(\log(\sqrt[4]{k}))}{\log(\sqrt[4]{k})},
\end{equation}
so the gap gets arbitrarily close to one. It then follows that
\begin{equation}
k_\text{KV} = 2^{577\,079}
\end{equation}
is enough to achieve a $p$-value of $10^{-5}$. For this $k$ we have $\omega_q(G_{\text{KV}_k}) \ge 0.999$, so the single-shot violation is indeed possible.

Note that $k_\text{KV}$ is much smaller than $d_\text{CHSH}$, the smallest Hilbert space dimension for which we could prove a single-shot violation with parallel repetition, but as we argued before we expect this to be only an artefact of the looseness of the concentration bound. Furthermore, the parallel repetition is a much easier experiment to perform than playing $G_{\text{KV}_k}$, as that requires one to do entangled measurements on a gigantic quantum system, whereas parallel repetition requires only independent measurements.

\section{Bounds on the achievable gap}\label{sec:asymptotics}

It is also interesting to consider the maximal gap $\chi_G$ that can be achieved by quantum states of a given dimension. For this purpose it is convenient to introduce the quantity
\begin{equation}
\Xi(G) := \frac1{1-\chi_G},
\end{equation}
which gets arbitrarily large as the gap gets arbitrarily close to one. $\Xi(G)$ is closely related to the so-called \text{largest violation} of a Bell inequality, introduced in Refs.~\cite{junge2010,junge2010b}, which is defined for nonlocal games as
\begin{equation}
\text{LV}(G) := \frac{\omega_q(G)}{\omega_\ell(G)}.
\end{equation}
It is easy to see that 
\begin{equation}
\Xi(G) \le \text{LV}(G),
\end{equation}
and equality holds when $\omega_q(G) = 1$. This implies that upper bounds on $\text{LV}(G)$ are also upper bounds on $\Xi(G)$, and that a nonlocal game with a large $\Xi(G)$ will have all the benefits associated with $\text{LV}(G)$, such as high resistance to noise \cite{junge2010}, in addition to having small expected $p$-value.

We consider then the largest $\Xi$ that can be achieved by a quantum state of local dimension $d$, taking the supremum over all possible nonlocal games, in which case we write $\Xi(d)$. In Appendix \ref{app:mt} we obtain upper bounds on $\Xi(d)$ by extending  the LHV models for noisy quantum states from Ref.~\cite{almeida2007} and the bounds they imply on $\text{LV}(d)$, generalising the technique introduced in Ref.~\cite{palazuelos2012b}. We obtain that
\begin{equation}
\Xi(d) \le \frac{d^2}{1+\left(\frac{d-1}{d} \right)^d (3d-1) } \le \frac{e}3d.
\end{equation}
If we restrict the measurements in the quantum strategies to be projective, we obtain the tighter bound
\begin{equation}
\Xi^\text{proj}(d) \le \frac{d^2}{(d+1)H(d)-d} \le \frac{d}{\log(d) + \gamma -1},
\end{equation}
where $H(d):=\sum_{i=1}^d \frac{1}{i}$ and $\gamma \approx 0.577$ is the Euler–Mascheroni constant.

When $G$ is an XOR game, the results of Tsirelson imply that $\frac{\omega_q(G)-\frac12}{\omega_\ell(G)-\frac12} \le K_\mathbb{R}(2d^2)$, where $K_\mathbb{R}(d)$ is the real Grothendieck constant of order $d$ \cite{tsirelson1987,acin2006}. When in addition the quantum state is restricted to be $\ket{\phi_d}$, the maximally entangled state of local dimension $d$, it holds that $\frac{\omega_q(G)-\frac12}{\omega_\ell(G)-\frac12} \le K_\mathbb{R}(d^2-1)$. Using the fact that $\omega_\ell(G) \le \frac1{\Xi(G)}$, these bounds imply that
	\begin{equation}
	\Xi^\text{XOR}(d) \le \frac{2}{1 + 1/K_\mathbb{R}(2d^2)},
	\end{equation}
and
	\begin{equation}
	\Xi^{\text{XOR},\ket{\phi_d}}(d) \leq \frac{2}{1+1/K_\mathbb{R} (d^2-1)}.
	\end{equation}
	Note that for $d=2$ the assumption that $G$ is an XOR game can be replaced with the assumption that $G$ only has two outcomes per party, or equivalently that the measurements are restricted to be projective. It is known that $1.4359 \le K_\mathbb{R}(3) \leq 1.4644$ \cite{divianszky2017,hirsch2017}, which implies that $\Xi^{\text{XOR},\ket{\phi_2}}(2) \le 1.1885$. Note furthermore that these bounds cannot be tight, because $\omega_\ell(G) = \frac1{\Xi(G)}$ only when $\omega_q(G) = 1$, but for games with two outcomes per party $\omega_q(G) = 1$ implies $\omega_\ell(G)=1$ \cite{cleve2004}.

The results of Sections \ref{sec:parallelrepetition} and \ref{sec:khotvishnoi} give us lower bounds for $\Xi(d)$. The parallel repetition construction give us for any nonlocal game
\begin{multline}
\Xi(G^n_\delta) \ge \Bigg[ \exp\de{-n D(\omega_\ell(G)+\delta||\omega_q(G)} \\ + 2 \exp\de{-n t \log\de{\frac{\omega_\ell(G) + \delta - t}{\omega_\ell(G) + 2\delta/3}}} \Bigg]^{-1},
\end{multline}
which for the CHSH game results in
\begin{equation}
\Xi(d) \ge \frac13 d^\frac{2}{10\,000\,000}.
\end{equation}
The bounds for the Khot-Vishnoi game imply that 
\begin{equation}
\Xi(d) \ge \frac{\log(\sqrt[4]{d})}{\log(\log(\sqrt[4]{d}))+1},
\end{equation}
which is asymptotically smaller.

Both results are very far from the existing upper bound, but we expect it to be actually achievable. As we discussed in Section \ref{sec:parallelrepetition}, we expect a better concentration bound to show that $\Xi(d)$ close to $d^{\frac12\log_2(9/\sqrt{66})} \approx d^{0.07}$ is possible.

\section{Algorithm for calculating local bounds}\label{sec:flavien}

To calculate the local bound of a Bell functional $M$ with $s_A,s_B$ inputs and $k_A,k_B$ outputs for Alice and Bob, the naïve approach is to just try all possible $k_A^{s_A}k_B^{s_B}$ deterministic strategies and take the maximum. Since the number of strategies is exponential in both $s_A$ and $s_B$, this becomes impractical very quickly. A significantly faster algorithm can be obtained if we notice that for any given strategy of Bob it is trivial to determine the corresponding optimal strategy of Alice, so we only need to try all of Bob's strategies. Analogous algorithms based on this observation have been developed in Refs.~\cite{liang2009,brierley2016} for particular cases.

To be more precise, let $P$ the behaviour generated by the deterministic probability distributions $D^A(a|x)$ and $D^B(b|y)$, that is, $P^{ab}_{xy} := D^A(a|x)D^B(b|y)$. The value of the Bell functional with such a behaviour is
\begin{align}
\langle M, P \rangle &= \sum_{abxy} M^{ab}_{xy}D^A(a|x)D^B(b|y) \\
		     &= \sum_{ax} D^A(a|x) \sum_{by}M^{ab}_{xy}D^B(b|y) \\
		     &=: \sum_{ax} D^A(a|x) M_B(a,x)
\end{align}
so to maximize it Alice needs to, for each $x$, output with probability 1 an $a$ that maximizes $M_B(a,x)$. The value of the Bell functional in this case becomes
\begin{equation}
\langle M, P \rangle = \sum_x \max_a M_B(a,x).
\end{equation}
Therefore, to compute the local bound we just need to loop over all $k_B^{s_B}$ strategies for Bob, generating them on the fly to save memory, compute 
\begin{equation}
M_B(a,x) = \sum_{by}G^{ab}_{xy}D^B(b|y) = \sum_{y}G^{af(y)}_{xy},
\end{equation}
since $D^B(b|y) = [b = f(y)]$ for some $f(y)$, compute $\sum_x \max_a M_B(a,x)$, and take the maximum over the optimal value for each of Bob's strategies. Note that this algorithm will be specially good when the nonlocal game is asymmetric, i.e., when $k_A^{s_A} \neq k_B^{s_B}$, in which case we choose Bob as the party with the fewest strategies.

We provide an implementation of this algorithm in C as an ancillary file.

\section{Acknowledgements}
MA has received funding from the European Union's Horizon 2020 research and innovation programme under the Marie Skłodowska‐Curie grant agreement No 801110 and the Austrian Federal Ministry of Education, Science and Research (BMBWF). It reflects only the author's view, the EU Agency is not responsible for any use that may be made of the information it contains. FH acknowledges funding from the Swiss National Fund (SNF) through the early Postdoc.Mobility P2GEP2 181509. MTQ acknowledges funding from the MEXT Quantum Leap Flagship Program (MEXT Q-LEAP) Grant Number JPMXS0118069605, the support of the Austrian Science Fund (FWF) through the SFB project ``BeyondC'', a grant from the Foundational Questions Institute (FQXi) Fund and a grant from the John Templeton Foundation (Project No. 61466) as part of the The Quantum Information Structure of Spacetime (QISS) Project (qiss.fr). The opinions expressed in this publication are those of the authors and do not necessarily reflect the views of the John Templeton Foundation. The authors also acknowledge Princess Bubblegum for performing very demanding calculations.

We dedicate this work to the memory of Boris Tsirelson.

\bibliographystyle{linksen}
\bibliography{biblio}

\providecommand{\href}[2]{#2}\begingroup\raggedright\begin{thebibliography}{10}

\bibitem{bell1964}
J.~S. Bell, ``{On the Einstein-Poldolsky-Rosen paradox},''
  \href{http://dx.doi.org/10.1103/PhysicsPhysiqueFizika.1.195}{{\em Physics}
  {\bfseries 1}, 195--200 (1964)}.

\bibitem{chsh1969}
J.~F. Clauser, M.~A. Horne, A.~Shimony, and R.~A. Holt, ``Proposed experiment
  to test local hidden-variable theories,''
  \href{http://dx.doi.org/10.1103/PhysRevLett.23.880}{{\em Physical Review
  Letters} {\bfseries 23}, 880--884 (1969)}.

\bibitem{bell1975}
J.~S. Bell, ``The theory of local beables,''.
  \url{https://cds.cern.ch/record/980036/files/197508125.pdf}.

\bibitem{deutsch2000}
D.~Deutsch and P.~Hayden, ``Information flow in entangled quantum systems,''
  \href{http://dx.doi.org/10.1098/rspa.2000.0585}{{\em Proceedings of the Royal
  Society A} {\bfseries 456}, 1759--1774 (2000)},
  \href{http://arxiv.org/abs/quant-ph/9906007}{{\ttfamily
  arXiv:quant-ph/9906007}}.

\bibitem{brown2016}
H.~R. Brown and C.~G. Timpson, {\em Bell on Bell's theorem: The changing face
  of nonlocality},
  \href{http://dx.doi.org/10.1017/CBO9781316219393.008}{p.~91–123}.
\newblock Cambridge University Press, 2016.
\newblock \href{http://arxiv.org/abs/1501.03521}{{\ttfamily arXiv:1501.03521
  [quant-ph]}}.

\bibitem{aspect1982}
A.~Aspect, J.~Dalibard, and G.~Roger, ``Experimental Test of Bell's
  Inequalities Using Time-Varying Analyzers,''
  \href{http://dx.doi.org/10.1103/PhysRevLett.49.1804}{{\em Physical Review
  Letters} {\bfseries 49}, 1804--1807 (1982)}.

\bibitem{weihs1998}
G.~Weihs, T.~Jennewein, C.~Simon, H.~Weinfurter, and A.~Zeilinger, ``Violation
  of Bell’s Inequality under Strict Einstein Locality Conditions,''
  \href{http://dx.doi.org/10.1103/physrevlett.81.5039}{{\em Physical Review
  Letters} {\bfseries 81}, 5039–5043 (1998)},
  \href{http://arxiv.org/abs/quant-ph/9810080}{{\ttfamily
  arXiv:quant-ph/9810080}}.

\bibitem{rowe2001}
M.~A. {Rowe}, D.~{Kielpinski}, V.~{Meyer}, C.~A. {Sackett}, W.~M. {Itano},
  C.~{Monroe}, and D.~J. {Wineland}, ``{Experimental violation of a Bell's
  inequality with efficient detection},''
  \href{http://dx.doi.org/10.1038/35057215}{{\em Nature} {\bfseries 409},
  791--794 (2001)}.

\bibitem{hensen2015}
B.~{Hensen}, H.~{Bernien}, A.~E. {Dr{\'e}au}, A.~{Reiserer}, N.~{Kalb}, M.~S.
  {Blok}, J.~{Ruitenberg}, R.~F.~L. {Vermeulen}, R.~N. {Schouten},
  C.~{Abell{\'a}n}, W.~{Amaya}, V.~{Pruneri}, M.~W. {Mitchell}, M.~{Markham},
  D.~J. {Twitchen}, D.~{Elkouss}, S.~{Wehner}, T.~H. {Taminiau}, and
  R.~{Hanson}, ``{Loophole-free Bell inequality violation using electron spins
  separated by 1.3 kilometres},''
  \href{http://dx.doi.org/10.1038/nature15759}{{\em Nature} {\bfseries 526},
  682--686 (2015)}, \href{http://arxiv.org/abs/1508.05949}{{\ttfamily
  arXiv:1508.05949 [quant-ph]}}.

\bibitem{giustina2015}
M.~Giustina, M.~A. Versteegh, S.~Wengerowsky, J.~Handsteiner, A.~Hochrainer,
  K.~Phelan, F.~Steinlechner, J.~Kofler, J.-{\AA}. Larsson, C.~Abellán, and
  et~al., ``Significant-Loophole-Free Test of Bell’s Theorem with Entangled
  Photons,'' \href{http://dx.doi.org/10.1103/physrevlett.115.250401}{{\em
  Physical Review Letters} {\bfseries 115}, 250401 (2015)},
  \href{http://arxiv.org/abs/1511.03190}{{\ttfamily arXiv:1511.03190
  [quant-ph]}}.

\bibitem{shalm2015}
L.~K. Shalm, E.~Meyer-Scott, B.~G. Christensen, P.~Bierhorst, M.~A. Wayne,
  M.~J. Stevens, T.~Gerrits, S.~Glancy, D.~R. Hamel, M.~S. Allman, and et~al.,
  ``Strong Loophole-Free Test of Local Realism,''
  \href{http://dx.doi.org/10.1103/physrevlett.115.250402}{{\em Physical Review
  Letters} {\bfseries 115}, 250402 (2015)},
  \href{http://arxiv.org/abs/1511.03189}{{\ttfamily arXiv:1511.03189
  [quant-ph]}}.

\bibitem{cleve2004}
R.~Cleve, P.~Høyer, B.~Toner, and J.~Watrous, ``Consequences and Limits of
  Nonlocal Strategies,''
  \href{http://arxiv.org/abs/quant-ph/0404076}{{\ttfamily
  arXiv:quant-ph/0404076}}.

\bibitem{barrett2002}
J.~{Barrett}, D.~{Collins}, L.~{Hardy}, A.~{Kent}, and S.~{Popescu}, ``{Quantum
  nonlocality, Bell inequalities, and the memory loophole},''
  \href{http://dx.doi.org/10.1103/PhysRevA.66.042111}{{\em Physical Review A}
  {\bfseries 66}, 042111 (2002)},
  \href{http://arxiv.org/abs/quant-ph/0205016}{{\ttfamily
  arXiv:quant-ph/0205016}}.

\bibitem{gill2001}
R.~D. Gill and J.-{\AA}. Larsson, ``Accardi Contra Bell (Cum Mundi): The
  Impossible Coupling,'' {\em Lecture Notes-Monograph Series} {\bfseries 42},
  133--154 (2003), \href{http://arxiv.org/abs/quant-ph/0110137}{{\ttfamily
  arXiv:quant-ph/0110137}}.

\bibitem{rao2011}
A.~Rao, ``Parallel repetition in projection games and a concentration bound,''
  \href{http://dx.doi.org/10.1137/080734042}{{\em SIAM Journal on Computing}
  {\bfseries 40}, 1871--1891 (2011)}.

\bibitem{kempe2007}
J.~Kempe, O.~Regev, and B.~Toner, ``Unique Games with Entangled Provers are
  Easy,'' \href{http://arxiv.org/abs/0710.0655}{{\ttfamily arXiv:0710.0655
  [quant-ph]}}.

\bibitem{junge2010}
M.~Junge, C.~Palazuelos, D.~Pérez-García, I.~Villanueva, and M.~M. Wolf,
  ``Unbounded Violations of Bipartite Bell Inequalities via Operator Space
  Theory,'' \href{http://dx.doi.org/10.1007/s00220-010-1125-5}{{\em
  Communications in Mathematical Physics} {\bfseries 300}, 715–739 (2010)},
  \href{http://arxiv.org/abs/0910.4228}{{\ttfamily arXiv:0910.4228
  [quant-ph]}}.

\bibitem{junge2010b}
M.~Junge, C.~Palazuelos, D.~Pérez-García, I.~Villanueva, and M.~M. Wolf,
  ``Operator Space Theory: A Natural Framework for Bell Inequalities,''
  \href{http://dx.doi.org/10.1103/physrevlett.104.170405}{{\em Physical Review
  Letters} {\bfseries 104}, 170405 (2010)},
  \href{http://arxiv.org/abs/0912.1941}{{\ttfamily arXiv:0912.1941
  [quant-ph]}}.

\bibitem{tsirelson80}
B.~S. Cirel'son, ``Quantum generalizations of Bell's inequality,''
  \href{http://dx.doi.org/10.1007/BF00417500}{{\em Letters in Mathematical
  Physics} {\bfseries 4}, 93--100 (1980)}.

\bibitem{brunner14}
N.~{Brunner}, D.~{Cavalcanti}, S.~{Pironio}, V.~{Scarani}, and S.~{Wehner},
  ``{Bell nonlocality},''
  \href{http://dx.doi.org/10.1103/RevModPhys.86.419}{{\em Reviews of Modern
  Physics} {\bfseries 86}, 419--478 (2014)},
  \href{http://arxiv.org/abs/1303.2849}{{\ttfamily arXiv:1303.2849
  [quant-ph]}}.

\bibitem{benor1988}
M.~Ben-Or, S.~Goldwasser, J.~Kilian, and A.~Wigderson,
  \href{http://dx.doi.org/10.1145/62212.62223}{``Multi-Prover Interactive
  Proofs: How to Remove Intractability Assumptions,''} in {\em Proceedings of
  the Twentieth Annual ACM Symposium on Theory of Computing}, STOC ’88,
  p.~113–131.
\newblock 1988.

\bibitem{tsirelson1997}
B.~Tsirelson, ``Quantum information processing - Lecture notes,'' 1997.
\newblock \url{https://www.webcitation.org/5fl2WZOMI}.

\bibitem{buhrman2010}
H.~{Buhrman}, O.~{Regev}, G.~{Scarpa}, and R.~{de Wolf}, ``{Near-Optimal and
  Explicit Bell Inequality Violations},''
  \href{http://dx.doi.org/10.1109/ccc.2011.30}{{\em 2011 IEEE 26th Annual
  Conference on Computational Complexity} (2011)},
  \href{http://arxiv.org/abs/1012.5043}{{\ttfamily arXiv:1012.5043
  [quant-ph]}}.

\bibitem{palazuelos2016}
C.~{Palazuelos} and T.~{Vidick}, ``{Survey on nonlocal games and operator space
  theory},'' \href{http://dx.doi.org/10.1063/1.4938052}{{\em Journal of
  Mathematical Physics} {\bfseries 57}, 015220 (2016)},
  \href{http://arxiv.org/abs/1512.00419}{{\ttfamily arXiv:1512.00419
  [quant-ph]}}.

\bibitem{froissart1981}
M.~Froissart, ``Constructive generalization of Bell’s inequalities,''
  \href{http://dx.doi.org/10.1007/BF02903286}{{\em Il Nuovo Cimento B
  (1971-1996)} {\bfseries 64}, 241--251 (1981)}.

\bibitem{zhang2011}
Y.~{Zhang}, S.~{Glancy}, and E.~{Knill}, ``{Asymptotically optimal data
  analysis for rejecting local realism},''
  \href{http://dx.doi.org/10.1103/PhysRevA.84.062118}{{\em Physical Review A}
  {\bfseries 84}, 062118 (2011)},
  \href{http://arxiv.org/abs/1108.2468}{{\ttfamily arXiv:1108.2468
  [quant-ph]}}.

\bibitem{elkouss2016}
D.~{Elkouss} and S.~{Wehner}, ``{(Nearly) optimal P-values for all Bell
  inequalities},'' \href{http://dx.doi.org/10.1038/npjqi.2016.26}{{\em npj
  Quantum Information} {\bfseries 2}, (2016)},
  \href{http://arxiv.org/abs/1510.07233}{{\ttfamily arXiv:1510.07233
  [quant-ph]}}.

\bibitem{gill2003}
R.~D. {Gill}, ``{Time, Finite Statistics, and Bell's Fifth Position},''
  \href{http://arxiv.org/abs/quant-ph/0301059}{{\ttfamily
  arXiv:quant-ph/0301059}}.

\bibitem{zhang2013}
Y.~{Zhang}, S.~{Glancy}, and E.~{Knill}, ``{Efficient quantification of
  experimental evidence against local realism},''
  \href{http://dx.doi.org/10.1103/PhysRevA.88.052119}{{\em Physical Review A}
  {\bfseries 88}, 052119 (2013)},
  \href{http://arxiv.org/abs/1303.7464}{{\ttfamily arXiv:1303.7464
  [quant-ph]}}.

\bibitem{bierhorst2014}
P.~Bierhorst, ``A Rigorous Analysis of the Clauser–Horne–Shimony–Holt
  Inequality Experiment When Trials Need Not be Independent,''
  \href{http://dx.doi.org/10.1007/s10701-014-9811-3}{{\em Foundations of
  Physics} {\bfseries 44}, 736–761 (2014)},
  \href{http://arxiv.org/abs/1311.3605}{{\ttfamily arXiv:1311.3605
  [quant-ph]}}.

\bibitem{rosset14}
D.~{Rosset}, J.-D. {Bancal}, and N.~{Gisin}, ``{Classifying 50 years of Bell
  inequalities},'' \href{http://dx.doi.org/10.1088/1751-8113/47/42/424022}{{\em
  Journal of Physics A Mathematical General} {\bfseries 47}, 424022 (2014)},
  \href{http://arxiv.org/abs/1404.1306}{{\ttfamily arXiv:1404.1306
  [quant-ph]}}.

\bibitem{renou2017}
M.~O. {Renou}, D.~{Rosset}, A.~{Martin}, and N.~{Gisin}, ``{On the
  inequivalence of the CH and CHSH inequalities due to finite statistics},''
  \href{http://dx.doi.org/10.1088/1751-8121/aa6f78}{{\em Journal of Physics A
  Mathematical General} {\bfseries 50}, 255301 (2017)},
  \href{http://arxiv.org/abs/1610.01833}{{\ttfamily arXiv:1610.01833
  [quant-ph]}}.

\bibitem{diamond2016}
S.~Diamond and S.~Boyd, ``{CVXPY}: {A} {P}ython-embedded modeling language for
  convex optimization,'' {\em Journal of Machine Learning Research} {\bfseries
  17}, 1--5 (2016), \href{http://arxiv.org/abs/1603.00943}{{\ttfamily
  arXiv:1603.00943 [math.OC]}}.

\bibitem{agrawal2018}
A.~Agrawal, R.~Verschueren, S.~Diamond, and S.~Boyd, ``A rewriting system for
  convex optimization problems,''
  \href{http://dx.doi.org/10.1080/23307706.2017.1397554}{{\em Journal of
  Control and Decision} {\bfseries 5}, 42--60 (2018)},
  \href{http://arxiv.org/abs/1709.04494}{{\ttfamily arXiv:1709.04494
  [math.OC]}}.

\bibitem{collins2002}
D.~{Collins}, N.~{Gisin}, N.~{Linden}, S.~{Massar}, and S.~{Popescu}, ``{Bell
  Inequalities for Arbitrarily High-Dimensional Systems},''
  \href{http://dx.doi.org/10.1103/PhysRevLett.88.040404}{{\em Physical Review
  Letters} {\bfseries 88}, 040404 (2002)},
  \href{http://arxiv.org/abs/quant-ph/0106024}{{\ttfamily
  arXiv:quant-ph/0106024}}.

\bibitem{acin2005}
A.~{Acín}, R.~{Gill}, and N.~{Gisin}, ``{Optimal Bell Tests Do Not Require
  Maximally Entangled States},''
  \href{http://dx.doi.org/10.1103/PhysRevLett.95.210402}{{\em Physical Review
  Letters} {\bfseries 95}, 210402 (2005)},
  \href{http://arxiv.org/abs/quant-ph/0506225}{{\ttfamily
  arXiv:quant-ph/0506225}}.

\bibitem{zohren2008}
S.~Zohren and R.~D. Gill, ``Maximal Violation of the CGLMP Inequality for
  Infinite Dimensional States,''
  \href{http://dx.doi.org/10.1103/physrevlett.100.120406}{{\em Physical Review
  Letters} {\bfseries 100}, (2008)},
  \href{http://arxiv.org/abs/quant-ph/0612020}{{\ttfamily
  arXiv:quant-ph/0612020}}.

\bibitem{acin2002}
A.~Acín, T.~Durt, N.~Gisin, and J.~I. Latorre, ``Quantum nonlocality in two
  three-level systems,''
  \href{http://dx.doi.org/10.1103/physreva.65.052325}{{\em Physical Review A}
  {\bfseries 65}, (2002)},
  \href{http://arxiv.org/abs/quant-ph/0111143}{{\ttfamily
  arXiv:quant-ph/0111143}}.

\bibitem{navascues2007}
M.~Navascués, S.~Pironio, and A.~Acín, ``Bounding the Set of Quantum
  Correlations,'' \href{http://dx.doi.org/10.1103/physrevlett.98.010401}{{\em
  Physical Review Letters} {\bfseries 98}, (2007)},
  \href{http://arxiv.org/abs/quant-ph/0607119}{{\ttfamily
  arXiv:quant-ph/0607119}}.

\bibitem{zohren2010}
S.~Zohren, P.~Reska, R.~D. Gill, and W.~Westra, ``A tight Tsirelson inequality
  for infinitely many outcomes,''
  \href{http://dx.doi.org/10.1209/0295-5075/90/10002}{{\em EPL (Europhysics
  Letters)} {\bfseries 90}, 10002 (2010)},
  \href{http://arxiv.org/abs/1003.0616}{{\ttfamily arXiv:1003.0616
  [quant-ph]}}.

\bibitem{collins2004}
D.~{Collins} and N.~{Gisin}, ``{A relevant two qubit Bell inequality
  inequivalent to the CHSH inequality},''
  \href{http://dx.doi.org/10.1088/0305-4470/37/5/021}{{\em Journal of Physics A
  Mathematical General} {\bfseries 37}, 1775--1787 (2004)},
  \href{http://arxiv.org/abs/quant-ph/0306129}{{\ttfamily
  arXiv:quant-ph/0306129}}.

\bibitem{avis2007}
D.~Avis and T.~Ito, ``New classes of facets of the cut polytope and tightness
  of $I_{mm22}$ Bell inequalities,''
  \href{http://dx.doi.org/10.1016/j.dam.2007.03.005}{{\em Discrete Applied
  Mathematics} {\bfseries 155}, 1689 -- 1699 (2007)},
  \href{http://arxiv.org/abs/math/0505143}{{\ttfamily arXiv:math/0505143}}.

\bibitem{pal2010}
K.~F. Pál and T.~Vértesi, ``Maximal violation of the I3322 inequality using
  infinite-dimensional quantum systems,''
  \href{http://dx.doi.org/10.1103/physreva.82.022116}{{\em Physical Review A}
  {\bfseries 82}, (2010)}, \href{http://arxiv.org/abs/1006.3032}{{\ttfamily
  arXiv:1006.3032 [quant-ph]}}.

\bibitem{divianszky2017}
P.~Diviánszky, E.~Bene, and T.~Vértesi, ``Qutrit witness from the
  Grothendieck constant of order four,''
  \href{http://dx.doi.org/10.1103/physreva.96.012113}{{\em Physical Review A}
  {\bfseries 96}, 012113 (2017)},
  \href{http://arxiv.org/abs/1707.04719}{{\ttfamily arXiv:1707.04719
  [quant-ph]}}.

\bibitem{fortnow1989}
L.~Fortnow, {\em Complexity-theoretic aspects of interactive proof systems}.
\newblock Phd thesis, 1989.
\newblock \url{http://people.cs.uchicago.edu/~fortnow/papers/thesis.pdf}.

\bibitem{feige1992}
U.~Feige and L.~Lovász,
  \href{http://dx.doi.org/10.1145/129712.129783}{``Two-Prover One-Round Proof
  Systems: Their Power and Their Problems (Extended Abstract),''} in {\em
  Proceedings of the Twenty-Fourth Annual ACM Symposium on Theory of
  Computing}, STOC ’92, p.~733–744.
\newblock 1992.

\bibitem{brassard2005}
G.~Brassard, A.~Broadbent, and A.~Tapp, ``Quantum Pseudo-Telepathy,''
  \href{http://dx.doi.org/10.1007/s10701-005-7353-4}{{\em Foundations of
  Physics} {\bfseries 35}, 1877–1907 (2005)},
  \href{http://arxiv.org/abs/quant-ph/0407221}{{\ttfamily
  arXiv:quant-ph/0407221}}.

\bibitem{cabello2001}
A.~{Cabello}, ``{``All versus Nothing'' Inseparability for Two Observers},''
  \href{http://dx.doi.org/10.1103/PhysRevLett.87.010403}{{\em Physical Review
  Letters} {\bfseries 87}, 010403 (2001)},
  \href{http://arxiv.org/abs/quant-ph/0101108}{{\ttfamily
  arXiv:quant-ph/0101108}}.

\bibitem{aravind2002}
P.~K. Aravind, ``A simple demonstration of Bell's theorem involving two
  observers and no probabilities or inequalities,''
  \href{http://dx.doi.org/10.1119/1.1773173}{{\em American Journal of Physics}
  {\bfseries 72}, 1303 (2004)},
  \href{http://arxiv.org/abs/quant-ph/0206070}{{\ttfamily
  arXiv:quant-ph/0206070}}.

\bibitem{raz1998}
R.~Raz, ``A Parallel Repetition Theorem,''
  \href{http://dx.doi.org/10.1137/S0097539795280895}{{\em SIAM Journal on
  Computing} {\bfseries 27}, 763--803 (1998)}.

\bibitem{holenstein2006}
T.~{Holenstein}, ``{Parallel repetition: simplifications and the no-signaling
  case},'' \href{http://dx.doi.org/10.4086/toc.2009.v005a008}{{\em Theory of
  Computing} 141–172 (2009)},
  \href{http://arxiv.org/abs/cs/0607139}{{\ttfamily arXiv:cs/0607139}}.

\bibitem{khot2005}
S.~A. {Khot} and N.~K. {Vishnoi},
  \href{http://dx.doi.org/10.1109/SFCS.2005.74}{``The unique games conjecture,
  integrality gap for cut problems and embeddability of negative type metrics
  into $\ell_1$,''} in {\em 46th Annual IEEE Symposium on Foundations of
  Computer Science (FOCS'05)}, pp.~53--62.
\newblock 2005.
\newblock \href{http://arxiv.org/abs/1305.4581}{{\ttfamily arXiv:1305.4581
  [cs.CC]}}.

\bibitem{palazuelos2012}
C.~{Palazuelos}, ``{Superactivation of Quantum Nonlocality},''
  \href{http://dx.doi.org/10.1103/PhysRevLett.109.190401}{{\em Physical Review
  Letters} {\bfseries 109}, 190401 (2012)},
  \href{http://arxiv.org/abs/1205.3118}{{\ttfamily arXiv:1205.3118
  [quant-ph]}}.

\bibitem{almeida2007}
M.~L. {Almeida}, S.~{Pironio}, J.~{Barrett}, G.~{Tóth}, and A.~{Acín},
  ``{Noise Robustness of the Nonlocality of Entangled Quantum States},''
  \href{http://dx.doi.org/10.1103/PhysRevLett.99.040403}{{\em Physical Review
  Letters} {\bfseries 99}, 040403 (2007)},
  \href{http://arxiv.org/abs/quant-ph/0703018}{{\ttfamily
  arXiv:quant-ph/0703018}}.

\bibitem{palazuelos2012b}
C.~{Palazuelos}, ``{On the largest Bell violation attainable by a quantum
  state},'' \href{http://dx.doi.org/10.1016/j.jfa.2014.07.028}{{\em Journal of
  Functional Analysis} {\bfseries 267}, 1959--1985 (2014)},
  \href{http://arxiv.org/abs/1206.3695}{{\ttfamily arXiv:1206.3695
  [quant-ph]}}.

\bibitem{tsirelson1987}
B.~Tsirelson, ``Quantum analogues of the Bell inequalities. The case of two
  spatially separated domains,''
  \href{http://dx.doi.org/10.1007/BF01663472}{{\em Journal of Soviet
  Mathematics} {\bfseries 36}, 557--570 (1987)}.

\bibitem{acin2006}
A.~{Acín}, N.~{Gisin}, and B.~{Toner}, ``{Grothendieck's constant and local
  models for noisy entangled quantum states},''
  \href{http://dx.doi.org/10.1103/PhysRevA.73.062105}{{\em Physical Review A}
  {\bfseries 73}, 062105 (2006)},
  \href{http://arxiv.org/abs/quant-ph/0606138}{{\ttfamily
  arXiv:quant-ph/0606138}}.

\bibitem{hirsch2017}
F.~Hirsch, M.~T. Quintino, T.~Vértesi, M.~Navascués, and N.~Brunner, ``Better
  local hidden variable models for two-qubit Werner states and an upper bound
  on the Grothendieck constant $K_G(3)$,''
  \href{http://dx.doi.org/10.22331/q-2017-04-25-3}{{\em Quantum} {\bfseries 1},
  3 (2017)}, \href{http://arxiv.org/abs/1609.06114}{{\ttfamily arXiv:1609.06114
  [quant-ph]}}.

\bibitem{liang2009}
Y.-C. Liang, C.-W. Lim, and D.-L. Deng, ``Reexamination of a multisetting Bell
  inequality for qudits,''
  \href{http://dx.doi.org/10.1103/physreva.80.052116}{{\em Physical Review A}
  {\bfseries 80}, (2009)}, \href{http://arxiv.org/abs/0903.4964}{{\ttfamily
  arXiv:0903.4964 [quant-ph]}}.

\bibitem{brierley2016}
S.~{Brierley}, M.~{Navascués}, and T.~{Vértesi}, ``{Convex separation from
  convex optimization for large-scale problems},''
  \href{http://arxiv.org/abs/1609.05011}{{\ttfamily arXiv:1609.05011
  [quant-ph]}}.

\bibitem{wiseman2007}
H.~M. {Wiseman}, S.~J. {Jones}, and A.~C. {Doherty}, ``{Steering, Entanglement,
  Nonlocality, and the Einstein-Podolsky-Rosen Paradox},''
  \href{http://dx.doi.org/10.1103/PhysRevLett.98.140402}{{\em Physical Review
  Letters} {\bfseries 98}, 140402 (2007)},
  \href{http://arxiv.org/abs/quant-ph/0612147}{{\ttfamily
  arXiv:quant-ph/0612147}}.

\bibitem{quintino2015}
M.~T. {Quintino}, T.~{Vértesi}, D.~{Cavalcanti}, R.~{Augusiak},
  M.~{Demianowicz}, A.~{Acín}, and N.~{Brunner}, ``{Inequivalence of
  entanglement, steering, and Bell nonlocality for general measurements},''
  \href{http://dx.doi.org/10.1103/PhysRevA.92.032107}{{\em Physical Review A}
  {\bfseries 92}, 032107 (2015)},
  \href{http://arxiv.org/abs/1501.03332}{{\ttfamily arXiv:1501.03332
  [quant-ph]}}.

\bibitem{gurvits2002}
L.~{Gurvits} and H.~{Barnum}, ``{Largest separable balls around the maximally
  mixed bipartite quantum state},''
  \href{http://dx.doi.org/10.1103/PhysRevA.66.062311}{{\em Physical Review A}
  {\bfseries 66}, 062311 (2002)},
  \href{http://arxiv.org/abs/quant-ph/0204159}{{\ttfamily
  arXiv:quant-ph/0204159}}.

\bibitem{boyd2004}
S.~Boyd and L.~Vandenberghe, {\em Convex Optimization}.
\newblock Cambridge University Press, 2004.
\newblock \url{http://www.stanford.edu/~boyd/cvxbook/}.

\end{thebibliography}\endgroup

\appendix

\section{Nonlocal games with probabilistic predicate}\label{app:deterministic}

Here we show that there exists a Bell functional that cannot be transformed into a nonlocal game with deterministic predicate by providing an explicit example. To start with, note that if a Bell functional $M$ is such that for each setting $x,y$ the tensor $M^{ab}_{xy}$ takes only two different values, then it is easy to transform it into deterministic nonlocal game via translation \eqref{eq:translation} and scaling \eqref{eq:scaling}: just take the form from Theorem \ref{thm:optimalgame}. On the other hand, if there are more than two different values in each setting, translation and scaling won't help, as they cannot change the number of different coefficients in a given setting. Therefore, the question of whether a Bell functional can be transformed into a deterministic game reduces to whether we can use the no-signalling constraints to make each setting have only two different values. Intuitively it's clear that this shouldn't be possible: in a scenario with $k$ outcomes per party each setting will have $k^2$ coefficients, but only $2k-2$ no-signalling constraints will act non-trivially on it.

Consider then a Bell functional such that the coefficients of one of the settings are given by
\begin{equation}
R := \begin{pmatrix}
0 & 0 & 0 \\
1 & 2 & 6 \\
6 & 4 & 1
\end{pmatrix}.
\end{equation}
The most general way to transform it with the no-signalling constraints takes it to
\begin{equation}
R' = \begin{pmatrix}
0 & b & b' \\
1+a & 2+a+b & 6+a+b' \\
6+a'& 4+a'+b & 1+a'+b'
\end{pmatrix}.
\end{equation}
To make this setting take only two values, there are three possibilities
\begin{enumerate}
\item All coefficients except $(0,0)$ are equal to each other.
\item At least one of the coefficients $(0,1), (0,2), (1,0)$, or $(2,0)$ are equal to 0.
\item At least one of the coefficients $(1,1), (1,2), (2,1)$, or $(2,2)$ are equal to 0.
\end{enumerate}
We shall examine all these possibilities in turn, and show that all imply that this setting takes at least three different values.
\begin{enumerate}
\item To make all these coefficients equal, we need in particular to make those in the last row equal, which implies that $b=2$ and $b'=5$. But this implies that there are three different coefficients in the first row, so this does not work.
\item To make the coefficient $(0,1)$ equal to zero, we need to set $b=0$. This implies that the coefficients $(1,0)$ and $(1,1)$ are equal to $1+a$ and $2+a$. Since it's not possible to make them equal, at least one of the must be equal to zero. To zero the first one, set $a=-1$. This implies that the coefficient $(1,1)$ is equal to 1, and that the coefficients $(0,2)$ and $(1,2)$ are $b'$ and $5+b'$. Since we already have two different coefficients, these latter two must be equal to either $0$ or $1$. But this is not possible, since their difference is always 5.
The cases $(0,2), (1,0)$, and $(2,0)$ follow from a similar argument, so we omit them.
\item To make the coefficient $(1,1)$ equal to zero, we need to set $b=-2-a$. This implies that the first two columns of $R'$ are $\begin{pmatrix}
0 & -2-a \\
1+a & 0 \\
6+a'& 2+a'-a 
\end{pmatrix}$. To make all these nonzero coefficients equal, we need in particular to set $a=-4$, but this takes the coefficient $(0,1)$ to 2 and the coefficient $(1,0)$ to $-3$, making this submatrix take three different values already, so this does not work. Therefore, at least one of the nonzero coefficients must be equal to zero. This leads us to examine the possibilities of setting $a=2$, $a=-1$, $a'=-6$, and $a'=-2+a$, which all lead to at least three different values in this submatrix.
The cases $(1,2), (2,1)$, or $(2,2)$ follow from a similar argument, so we omit them.
\end{enumerate}
\qed
\section{Lifting nonlocal games with probabilistic predicate}\label{app:deterministiclifting}

Here we show that if we also consider equivalence under liftings, i.e., addition of inputs, it is possible to transform all Bell functionals with two outcomes into nonlocal games with deterministic predicate.

\begin{theorem}
	Let $M$ be a bipartite Bell functional with $s$ inputs and two outputs per party. There exists a deterministic nonlocal game $G$ with $2s$ inputs and two outputs per party which is equivalent to $M$.
\end{theorem}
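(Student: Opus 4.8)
The plan is to exploit the observation from Appendix~\ref{app:deterministic} that a Bell functional is easy to turn into a deterministic game whenever each setting contains only two distinct coefficient values: in that case Theorem~\ref{thm:optimalgame} already produces a game with predicate $V(a,b,x,y) \in \{0,1\}$. For two outcomes per party the coefficient tensor $M^{ab}_{xy}$ is a $2 \times 2$ block for each $(x,y)$, so it contains at most four distinct values, not two. The idea is to use lifting to split the troublesome information across extra inputs so that each new setting sees only two values. Concretely, for each original input $x$ of Alice I would introduce a partner input $x'$ (and similarly $y,y'$ for Bob), doubling the number of inputs to $2m$ per party, and define a $\mu$ that is supported only on pairs of the form $(x,y)$, $(x,y')$, $(x',y)$, $(x',y')$ (with the primed/unprimed choice correlated or anticorrelated as needed), so that the four coefficients $M^{00}_{xy}, M^{01}_{xy}, M^{10}_{xy}, M^{11}_{xy}$ get distributed among four blocks, each of which needs to carry only a single nonzero value against a background of zeros.

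First I would write $M^{ab}_{xy}$ in correlator form, separating the full-correlation part $\langle A_x B_y\rangle$ from the marginals $\langle A_x\rangle, \langle B_y\rangle$ and the constant. The full-correlation part of a two-outcome block is, up to translation and scaling, exactly an XOR-type predicate $[a\oplus b = c_{xy}]$, which is already deterministic with two values; the obstruction is entirely in the marginal terms, which in a $2\times 2$ block create a third and fourth value. So the second step is to realise the marginal term $\langle A_x\rangle$ by a \emph{separate} lifted input: route some probability mass to a setting pair $(x', y^\star)$ on which the predicate depends only on $a$ (a deterministic "guess a fixed bit" predicate), and likewise for $\langle B_y\rangle$ via $(x^\star, y')$. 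Each such auxiliary block then has predicate taking values in $\{0,1\}$, hence two values, and the main blocks $(x,y)$ keep only the XOR part. Collecting the renormalised weights into $\mu$ and invoking Theorem~\ref{thm:optimalgame} (or Theorem~\ref{theo:bell=game}) on the resulting functional gives a bona fide nonlocal game with deterministic predicate.

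The final step is to check equivalence: one must verify that on every non-signalling behaviour the lifted functional $\langle G, \tilde P\rangle$ reproduces $\langle M, P\rangle$ up to the allowed affine transformation, where $\tilde P$ is the lift of $P$ to the enlarged scenario (inputs $x'$ behave identically to $x$). This is where the no-signalling and lifting bookkeeping has to be done carefully, since the marginal-extraction trick implicitly uses that, under lifting, $p(a|x') = p(a|x)$, which is precisely a no-signalling/consistency constraint of the enlarged scenario; one has to confirm the weights chosen make the recombination exact rather than merely proportional setting-by-setting.

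\textbf{Main obstacle.} The delicate point is arranging a \emph{single} probability distribution $\mu(x,y)$ over the $2m \times 2m$ enlarged input set that simultaneously (i) assigns each original block its correct relative weight for the XOR part, (ii) assigns each of the $2m$ marginal-extraction blocks the correct weight and sign for $\langle A_x\rangle$ and $\langle B_y\rangle$, and (iii) leaves every block with coefficients taking only two values so that positivity and normalisation (hence Theorem~\ref{theo:bell=game}) apply after a global translation and scaling. Signs of the marginal coefficients are the subtlety: a "wrong-sign" marginal must be absorbed by flipping which outcome the deterministic auxiliary predicate rewards, and one has to confirm this is always possible within two outcomes. I expect the bound $2m$ inputs to be exactly what is needed to give each original input one marginal-extraction partner, and the whole construction should fail for three or more outcomes — matching the remark after the theorem statement — because then the off-diagonal structure of a $d\times d$ block can no longer be flattened to two values by a bounded number of auxiliary inputs.
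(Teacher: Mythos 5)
Your overall strategy is the right one and matches the paper's: lift to $2m$ inputs per party, strip each $2\times2$ block down to two distinct coefficient values by exporting the ``marginal'' part $\delta_{a0}\alpha_{xy}+\delta_{b0}\beta_{xy}$ to auxiliary settings, and then invoke the construction of Theorem~\ref{thm:optimalgame} to get a deterministic predicate. Your side worries (signs of the marginals, assembling a single $\mu$) are non-issues: the per-setting translation and global rescaling in Theorem~\ref{thm:optimalgame} absorb both automatically, since a block of the form $-\delta_{a0}\alpha$ takes only two values regardless of the sign of $\alpha$.

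The genuine gap is in \emph{where} you place the exported marginal terms. You propose to realise $\langle A_x\rangle$ at a setting $(x',y^\star)$, i.e.\ with Alice's input changed to the partner $x'$, and you justify the recombination by asserting that $p(a|x')=p(a|x)$ is ``a no-signalling/consistency constraint of the enlarged scenario.'' It is not: in the lifted scenario $x$ and $x'$ are genuinely independent inputs, and no-signalling only forces a party's marginal \emph{for a fixed input} to be independent of the other party's input; it never relates marginals across two different inputs of the same party. Consequently a local (hence non-signalling) strategy can answer $x$ and $x'$ completely differently, decoupling your auxiliary ``guess a fixed bit'' blocks from the XOR blocks, so your lifted functional does not reproduce $\langle M,P\rangle$ on all non-signalling behaviours and is not equivalent to $M$ (its local bound can change). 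The correct routing — and this is exactly what the paper does — is to compensate Alice's marginal term for setting $(x,y)$ at the setting $(x,y')$, keeping \emph{Alice's} input fixed at $x$ and priming \emph{Bob's} input, so that the identity $\sum_b p(0b|xy)=\sum_b p(0b|xy')$ is an honest no-signalling constraint; symmetrically, Bob's marginal term is compensated at $(x',y)$. Concretely, with $\alpha_{xy}=M^{11}_{xy}-M^{01}_{xy}$ and $\beta_{xy}=M^{11}_{xy}-M^{10}_{xy}$ one sets $N^{ab}_{xy}=M^{ab}_{xy}+\delta_{a0}\alpha_{xy}+\delta_{b0}\beta_{xy}$, $N^{ab}_{xy'}=-\delta_{a0}\alpha_{xy}$, $N^{ab}_{x'y}=-\delta_{b0}\beta_{xy}$; every block then takes at most two values and $N$ agrees with $M$ on all non-signalling behaviours. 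With that correction the rest of your argument goes through.
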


\begin{proof}
	As remarked in Appendix \ref{app:deterministic}, we only need to make the coefficients $M^{ab}_{xy}$ assume two different values for each pair of inputs $({x},{y})$. Now, for each input $x$ and $y$, let us define an extra input $x'$ and $y'$ and consider an ``extended-input scenario'' where Alice and Bob have now $2s$ inputs each.
	By exploiting the no-signalling constraints
	\begin{multline}
		\alpha_{xy}\big[ p(00| {x} {y}) + p(01| {x} {y})\big]  =\\ \alpha_{xy}\big[p(00| {x} {y'})  + p(01| {x} {y'})\big]
	\end{multline}
	\begin{multline}
		\beta_{xy}\big[ p(00| {x} {y}) + p(10| {x} {y})\big]  =\\ \beta_{xy}\big[p(00| {x'} {y})  + p(10| {x'} {y}) \big],
	\end{multline}
	we can construct a Bell functional $N$ which is equivalent to $M$ via
\begin{gather}
N^{ab}_{xy} := M^{ab}_{xy} + \delta_{a0}\alpha_{xy} + \delta_{b0}\beta_{xy} \\
N^{ab}_{xy'} := -\delta_{a0}\alpha_{xy}\\
N^{ab}_{x'y} := -\delta_{b0}\beta_{xy} 
\end{gather}
	If we set $\alpha_{xy}:=-M^{01}_{xy} + M^{11}_{xy}$ and $\beta_{xy}:=-M^{10}_{xy}+M^{11}_{xy}$, direct calculation shows that for every fixed pair of inputs $({x},{y})$ the coefficients $N^{ab}_{xy}$ can only assume two different values. We can now perform the construction presented in proof of Theorem~\ref{thm:optimalgame} to obtain a deterministic game $G$ which is equivalent to $M$ in this extra-input scenario.
\end{proof}

Note that the example shown in Appendix \ref{app:deterministic} implies that even under liftings it is not possible to transform Bell functionals with three or more outputs per party into a nonlocal game with deterministic predicate.

\section{Bounding the $p$-value}\label{app:pvaluebound}

In this appendix we leave the argument $G$ implicit in order to simplify notation.

\begin{theorem}
For $\omega_\ell \le \omega_q$ it holds that
\begin{equation}
p(\lceil n\omega_q \rceil,n) \le (1-\chi)^{n\chi},
\end{equation}
where $\chi := \omega_q-\omega_\ell$.
\end{theorem}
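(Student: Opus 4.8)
The plan is to use a Chernoff-type (exponential Markov) bound, applied to the tail $p(v,n)$ \emph{before} summing it against $q(v,n)$. Note first that $p(v,n)=\Pr[W\ge v]$ where $W\sim\mathrm{Bin}(n,\omega_\ell)$, and $q(v,n)=\Pr[V=v]$ where $V\sim\mathrm{Bin}(n,\omega_q)$; taking $W,V$ independent we have the clean identity $p(n)=\sum_v\Pr[V=v]\Pr[W\ge v]=\Pr[W\ge V]$. To bound it, for any $t>0$ the event $W\ge v$ is the event $e^{tW}\ge e^{tv}$, so Markov's inequality gives $p(v,n)\le e^{-tv}\,\mathbb E[e^{tW}]=e^{-tv}(1-\omega_\ell+\omega_\ell e^t)^n$. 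Substituting into $p(n)=\sum_{v=0}^n q(v,n)\,p(v,n)$ and recognising $\sum_v q(v,n)e^{-tv}$ as the probability generating function of $\mathrm{Bin}(n,\omega_q)$ evaluated at $e^{-t}$, I obtain, for every $t>0$,
\begin{equation}
p(n)\le\Bigl[(1-\omega_\ell+\omega_\ell e^t)(1-\omega_q+\omega_q e^{-t})\Bigr]^n .
\end{equation}

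Next I would optimise the base of this power over $t$. Writing $r:=e^t$ and expanding, the bracket equals $(1-\omega_\ell)(1-\omega_q)+\omega_\ell\omega_q+(1-\omega_\ell)\omega_q/r+\omega_\ell(1-\omega_q)\,r$. By the AM--GM inequality the last two terms are minimised at $r=\sqrt{(1-\omega_\ell)\omega_q/(\omega_\ell(1-\omega_q))}$, and this value satisfies $r\ge 1$ (so that the corresponding $t$ is $\ge 0$) precisely because $\omega_\ell\le\omega_q$; the degenerate cases $\omega_\ell=\omega_q$, $\omega_\ell=0$, $\omega_q=1$ are handled directly and all give $p(n)\le(1-\chi^2)^n$ trivially. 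At the optimal $r$ the two middle terms coincide, each equalling $\sqrt{\omega_\ell(1-\omega_\ell)\omega_q(1-\omega_q)}$, so the bracket becomes
\begin{equation}
(1-\omega_\ell)(1-\omega_q)+\omega_\ell\omega_q+2\sqrt{\omega_\ell(1-\omega_\ell)\,\omega_q(1-\omega_q)} .
\end{equation}

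Finally I would compare this with $1-\chi^2=1-(\omega_q-\omega_\ell)^2$. Using $(1-\omega_\ell)(1-\omega_q)+\omega_\ell\omega_q=1-\omega_\ell-\omega_q+2\omega_\ell\omega_q$ and $1-(\omega_q-\omega_\ell)^2=1+2\omega_\ell\omega_q-\omega_\ell^2-\omega_q^2$, the claimed bound on the bracket is equivalent to
\begin{equation}
2\sqrt{\omega_\ell(1-\omega_\ell)\,\omega_q(1-\omega_q)}\le\omega_\ell(1-\omega_\ell)+\omega_q(1-\omega_q),
\end{equation}
which is again AM--GM, now applied to the nonnegative numbers $\omega_\ell(1-\omega_\ell)$ and $\omega_q(1-\omega_q)$. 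Raising both sides to the $n$-th power gives $p(n)\le(1-\chi^2)^n$. I do not expect a serious obstacle: the one genuine idea is to Chernoff-bound the binomial tail $p(v,n)$ rather than wrestle with the double sum, after which everything collapses to two applications of AM--GM; the only mildly delicate points are verifying that the optimal $r$ is admissible ($r\ge 1$) and dispatching the boundary cases $\omega_\ell\in\{0,\omega_q\}$, $\omega_q=1$.
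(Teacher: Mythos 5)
Your proposal is correct and follows essentially the same route as the paper: a Chernoff/generating-function bound on the tail $p(v,n)$, multiplication by the probability generating function of the quantum binomial, and optimisation of the free parameter (your $r=e^{t}$ is exactly the reciprocal of the paper's $z$), arriving at the same intermediate bound $\bigl(\sqrt{\omega_\ell\omega_q}+\sqrt{(1-\omega_\ell)(1-\omega_q)}\bigr)^{2n}$. The only cosmetic difference is the last step, where you compare with $1-\chi^2$ directly via AM--GM while the paper changes coordinates to $(\chi,\phi)$ and maximises over $\phi$ at $\phi=1$; your explicit handling of the admissibility condition $r\ge1$ and of the boundary cases is a small but welcome addition.
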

\begin{proof}
Let $\delta \ge 0$. From the Chernoff bound and the fact that $n (\omega_\ell + \delta) \le \lceil n(\omega_\ell + \delta) \rceil$ we have that
\begin{equation}
p(\lceil n(\omega_\ell + \delta) \rceil,n) \le e^{-nD(\omega_\ell + \delta\|\omega_\ell)}
\end{equation}
where 
\begin{equation}
D(a\|b) := a \log\de{ \frac{a}{b}} + (1-a) \log\de{\frac{1-a}{1-b}}
\end{equation}
is the binary relative entropy. We can lowerbound it by minimizing each term over $\omega_\ell$ individually, resulting in
\begin{equation}
D(\omega_\ell+\delta\|\omega_\ell) \ge \delta \log\de{\frac{1}{1-\delta}},
\end{equation}
and therefore for $\delta = \chi$ we obtain the claimed bound.
\end{proof}

\section{Bounds on the largest violation}\label{app:mt}

	As discussed in the main text, when Alice and Bob share a $d \times d$-dimensional quantum system, the \textit{largest violation} of a nonlocal game $G$ is defined as $\text{LV}(G,d):=\frac{\omega_{q}(G,d)}{\omega_\ell(G)}$, where the quantum value $\omega_{q}(G,d)$ is optimized over all $d\times d$ quantum states and local $d$-dimensional quantum measurements. In this work, we are mainly interested in the particular case where the Bell functional $M$ is a nonlocal game $G$, but the largest violation can be defined for any Bell functional $M$ via $\text{LV}(M,d):=\frac{Q(M,d)}{L(M)}$ where
\begin{equation} L(M) := \max_{P\in\mathcal L} \abs{\langle M,P \rangle}, \end{equation}
where $\mathcal L$ is the set of local behaviours, and 
\begin{equation} Q(M,d) := \sup_{P\in\mathcal Q_d} \abs{\langle M,P \rangle}, \end{equation}
where $\mathcal{Q}_d$ is the set of quantum behaviours with local dimension $d$. We are interested in the largest violation that can be achieved by a  $d \times d$ quantum state in \emph{any} nonlocal game or Bell functional, so we define
\begin{equation}
\text{LV}_G(d) := \sup_G \text{LV}(G,d),
\end{equation}
where the supremum is taken over all nonlocal games, and
\begin{equation}
\text{LV}_M(d) := \sup_M \text{LV}(M,d),
\end{equation}
where the supremum is taken over all Bell functionals. We now provide upper bounds for $\text{LV}_G(d)$ and $\text{LV}_M(d)$ which improve the existing ones \cite{palazuelos2012b,palazuelos2016}.

\begin{theorem}\label{theo:best}
	When Alice and Bob share a $d\times d$-dimensional quantum system it holds that
	\begin{equation}
	\text{LV}_G(d) \le \frac{d^2}{1+\left(\frac{d-1}{d} \right)^d (3d-1) } \le \frac{e}3d
	\end{equation}
and
	\begin{equation}
	\text{LV}_M(d) \le \frac{2d^2}{1+\left(\frac{d-1}{d} \right)^d (3d-1) } -1 \le \frac{2e}3d-1.
	\end{equation}
When the measurements are restricted to be projective it holds that
	\begin{equation}
	\text{LV}^\text{proj}_G(d) \le \frac{d^2}{(d+1)H(d)-d}  \le \frac{d}{\log(d) + \gamma -1},
	\end{equation}
	where $H(d):=\sum_{i=1}^d \frac{1}{i}$ and $\gamma \approx 0.577$ is the Euler–Mascheroni constant, and
	\begin{equation}
	\text{LV}^\text{proj}_M(d) \le \frac{2d^2}{(d+1)H(d)-d}  -1 \le \frac{2d}{\log(d) + \gamma -1}-1.
	\end{equation}
\end{theorem}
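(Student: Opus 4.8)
The plan is to derive all four inequalities from one ``universal noisy local model'': for each local dimension $d$ there should exist a visibility $\eta_d\in(0,1]$ such that, if \emph{one} party --- say Alice --- replaces each of her measurements $\DE{A^a_x}$ on $\mathbb C^d$ by the depolarised measurement $\DE{\eta_d A^a_x+\de{1-\eta_d}\tfrac{\tr A^a_x}{d}\id}$, the resulting behaviour is local \emph{for every} $d\times d$ state and \emph{every} set of measurements. Granting this, the bounds are almost free. For a $d\times d$ quantum behaviour $P$, the depolarised behaviour is $\tilde P^{ab}_{xy}=\eta_d P^{ab}_{xy}+\de{1-\eta_d}\tfrac{\tr A^a_x}{d}P^B_y(b)$, where $P^B_y(b):=\sum_a P^{ab}_{xy}$; its second summand $P^{\text{marg}}$ is a \emph{product} behaviour --- Alice outputs $a$ with probability $\tr(A^a_x)/d$ independently of Bob's outcome drawn from $P^B_y$ --- hence local. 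Since a nonlocal game $G$ has nonnegative coefficients (Theorem \ref{theo:bell=game}), locality of $\tilde P$ gives $\omega_\ell(G)\ge\langle G,\tilde P\rangle\ge\eta_d\langle G,P\rangle$; taking the supremum over $d\times d$ quantum behaviours yields $\omega_\ell(G)\ge\eta_d\,\omega_q(G,d)$, i.e.\ $\text{LV}_G(d)\le 1/\eta_d$. For a general Bell functional $M$ one uses instead $\abs{\langle M,\tilde P\rangle}\le L(M)$ (as $\tilde P$ is local) and $\abs{\langle M,P^{\text{marg}}\rangle}\le L(M)$ (as $P^{\text{marg}}$ is local), which combine to $\eta_d\abs{\langle M,P\rangle}\le(2-\eta_d)L(M)$, hence $\text{LV}_M(d)\le 2/\eta_d-1$. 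So everything reduces to establishing the universal model with the sharp $\eta_d$, namely $\eta_d=\de{1+(\tfrac{d-1}{d})^d(3d-1)}/d^2$ for general measurements and the (larger) value with $1/\eta_d^{\text{proj}}=d^2/\de{(d+1)H(d)-d}$ when the measurements are projective.

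The substance lies in constructing that model and, above all, in pinning down $\eta_d$ exactly; this is where I would extend the noisy-state local models of Almeida \etal and refine Palazuelos's technique so that the shared state drops out entirely. The key fact is that a depolarised $d$-dimensional measurement is a classical post-processing of a single \emph{state-independent} ``mother'' measurement --- the covariant rank-one POVM with element density $d\proj\psi$, so that $\int d\proj\psi\,\dint\psi=\id$ --- meaning that, below the threshold, $\eta A^a_x+\de{1-\eta}\tfrac{\tr A^a_x}{d}\id=\int p_x(a\midset\psi)\,d\proj\psi\,\dint\psi$ for suitable response functions $p_x(\cdot\midset\psi)$. Once Alice's measurements factor through this mother POVM, the hidden variable is $\psi$ itself: Alice outputs by $p_x(\cdot\midset\psi)$, Bob carries out his genuine quantum measurement on his normalised conditional state $\rho^B_\psi$, i.e.\ with the bona fide local response function $\tr(\rho^B_\psi B^b_y)$, so $\tilde P$ is local irrespective of the shared state. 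To get the optimal $\eta_d$ one optimises the $p_x(\cdot\midset\psi)$; the extremal instance is a rank-one POVM element $\proj\phi$ (a $d$-outcome orthonormal basis, in the projective case), for which the best response function is a threshold in the overlap $\abs{\braket\phi\psi}^2$, and the remaining computation is a one-dimensional integral against the law of $\abs{\braket\phi\psi}^2$ --- a $\mathrm{Beta}(1,d-1)$ distribution. Evaluating the integral and optimising the threshold produces $(\tfrac{d-1}{d})^d(3d-1)$ in the general case and $(d+1)H(d)-d$ in the projective case; the closed forms then relax to $\tfrac e3 d$, $d/\de{\log d+\gamma-1}$, and their $\text{LV}_M$ analogues by elementary estimates on $(1-1/d)^d$ and $H(d)$.

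The hard part will be the previous paragraph: obtaining the \emph{optimal} simulation visibility. Concretely, one must (i) reduce the optimisation over all $d\times d$ measurements to the extremal rank-one / orthonormal-basis case, so the problem becomes one-dimensional; (ii) solve that one-dimensional threshold optimisation in closed form, so the constants come out as $(\tfrac{d-1}{d})^d(3d-1)$ and $(d+1)H(d)-d$ rather than some looser expression --- anything weaker would fail to improve on \cite{palazuelos2012b,palazuelos2016}; and (iii) handle the measure-theoretic bookkeeping that makes ``$\psi$ as a continuous hidden variable carrying Bob's conditional state'' rigorous, which is precisely the technical ingredient of Palazuelos's method we are leaning on. The remaining pieces --- the two reductions of the first paragraph and the final asymptotic relaxations --- are routine once $\eta_d$ is in hand.
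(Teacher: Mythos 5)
Your two reduction steps are sound, and they are essentially the paper's Lemma \ref{lemma:translation} in disguise with the noise term chosen as the product behaviour: if depolarising Alice's measurements at visibility $\eta$ yields a local behaviour for every $d\times d$ state and all measurements, then indeed $\text{LV}_G(d)\le 1/\eta$ and $\text{LV}_M(d)\le 2/\eta-1$. The genuine gap is the value of $\eta_d$ you assign to the ``universal model'': you have conflated the depolarising \emph{visibility} with the \emph{fidelity} (maximally entangled fraction). Depolarising Alice's measurements at visibility $\eta$ produces exactly the behaviour of the state $\eta\rho+(1-\eta)\frac{I_d}{d}\otimes\rho_B$; for $\rho=\proj{\phi_d}$ this is the isotropic state at visibility $\eta$, and the construction you propose to carry out (mother covariant POVM, threshold response functions in the overlap, the $\mathrm{Beta}(1,d-1)$ integral) is precisely the Almeida et al.\ model, which gives locality only up to $\eta_A=\frac{(3d-1)(d-1)^{d-1}}{(d+1)d^d}$ for POVMs and $\eta_P=\frac{H(d)-1}{d-1}$ for projective measurements. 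The numbers you claim, $\eta_d=\bigl(1+(\tfrac{d-1}{d})^d(3d-1)\bigr)/d^2$ and $\bigl((d+1)H(d)-d\bigr)/d^2$, are not these visibilities but the corresponding fidelities $f=\eta_A+(1-\eta_A)/d^2$ and $f=\eta_P+(1-\eta_P)/d^2$; locality of the isotropic state at visibility $f>\eta_A$ is neither known nor derivable from that computation (for $d=2$ it would require a POVM local model at visibility $9/16=0.5625$, far beyond the known $5/12$). With the visibility your construction actually delivers, the argument proves only $\text{LV}_G(d)\le\frac{(d+1)d^d}{(3d-1)(d-1)^{d-1}}$ and $\text{LV}^\text{proj}_G(d)\le\frac{d-1}{H(d)-1}$, which have the correct leading asymptotics $\frac{e}{3}d$ and $d/\log d$ but are strictly weaker than the stated bounds, so the theorem as written does not follow.

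The missing idea is the re-decomposition that the paper exploits: the same local state can be rewritten with more weight on $\rho$, namely $\eta_A\rho+(1-\eta_A)\frac{I_d}{d}\otimes\rho_B = f\rho+(1-f)\,\frac{d\,I_d\otimes\rho_B-\rho}{d^2-1}$, and the complementary noise term is a genuine quantum state, in fact separable (Gurvits' result for $\frac{I_d\otimes I_d-\proj{\phi_d}}{d^2-1}$ plus local filtering). Feeding this decomposition into the translation lemma — using positivity of the Bell operator of a game to drop the noise term, and separability to bound it by $L(M)$ for general functionals — is exactly what upgrades $1/\eta_A$ to $d^2/\bigl(1+(\tfrac{d-1}{d})^d(3d-1)\bigr)$ and produces the $2/f-1$ bound for $\text{LV}_M$. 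Your framework can be repaired by inserting this step (equivalently, at the behaviour level, re-splitting $\eta_A P+(1-\eta_A)P^{\text{marg}}$ so that the junk is the behaviour of the separable state $\frac{d\,I_d\otimes\rho_B-\rho}{d^2-1}$ under the original measurements), but as written the claimed sharp $\eta_d$ is unsupported and the exact constants in the theorem are not recovered.
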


	To prove Theorem \ref{theo:best}, we first prove Lemma \ref{lemma:translation}, that shows how the existence of LHV models for a noisy version of a quantum state imply upper bounds on its largest violation. This connection was first made by Palazuelos in Ref.~\cite{palazuelos2012b} and it is generalised here\footnote{More precisely, when compared to our Lemma\,\ref{lemma:translation}, reference \cite{palazuelos2012b} considers only the case of general Bell functional and only the particular case where $\sigma=\frac{I_d}{d^2}$.}. We then show the best LHV models we have, which are the ones from Ref.~\cite{almeida2007} improved by considering a more general kind of noise.

	\begin{lemma}\label{lemma:translation}
		Let $\rho,\sigma$ be quantum states of dimension $d\times d$. If for some $\eta \ge 0$ the state $\rho_\text{LHV}:= \eta \rho + (1-\eta)\sigma$ admits a LHV model, we have $\text{LV}(G,\rho)\leq \frac{1}{\eta}$ for every nonlocal game $G$.

		Moreover, if the state $\sigma$ is separable, the inequality $\text{LV}(M,\rho)\leq \frac{2}{\eta}-1$ holds for every Bell functional $M$.
	\end{lemma}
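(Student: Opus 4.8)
The plan is to use the linearity of the Born rule together with the positivity of nonlocal games (Theorem~\ref{theo:bell=game}) in the first case, and the assumed separability of $\sigma$ in the second. Fix an arbitrary choice of local measurements $\{A^a_x\}$ for Alice and $\{B^b_y\}$ for Bob, and let $P_\rho$, $P_\sigma$, $P_\text{LHV}$ denote the behaviours they produce from $\rho$, $\sigma$, and $\rho_\text{LHV}=\eta\rho+(1-\eta)\sigma$ respectively; since $\rho_\text{LHV}$ is a genuine mixture we have $0\le\eta\le1$, and $P_\text{LHV}=\eta P_\rho+(1-\eta)P_\sigma$. Because $\rho_\text{LHV}$ admits an LHV model, $P_\text{LHV}$ is a local behaviour, and this holds for \emph{every} choice of measurements — which is what lets us take the supremum over measurements only at the very end.

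For a nonlocal game $G$, I would use that $G^{ab}_{xy}\ge 0$ (Theorem~\ref{theo:bell=game}) and that $P_\sigma$ has nonnegative entries, so $\langle G,P_\sigma\rangle\ge 0$. Rewriting $\eta\langle G,P_\rho\rangle=\langle G,P_\text{LHV}\rangle-(1-\eta)\langle G,P_\sigma\rangle$ and using $1-\eta\ge 0$ together with locality of $P_\text{LHV}$ gives $\eta\langle G,P_\rho\rangle\le\langle G,P_\text{LHV}\rangle\le\omega_\ell(G)$, hence $\langle G,P_\rho\rangle\le\omega_\ell(G)/\eta$. Taking the supremum over all measurements on $\rho$ yields $\omega_q(G,\rho)\le\omega_\ell(G)/\eta$, i.e.\ $\text{LV}(G,\rho)\le 1/\eta$.

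For a general Bell functional $M$ positivity is no longer available, so this is where the hypothesis that $\sigma$ be separable enters: then $P_\sigma$ is itself a local behaviour, so $\abs{\langle M,P_\sigma\rangle}\le L(M)$, and likewise $\abs{\langle M,P_\text{LHV}\rangle}\le L(M)$. Applying the triangle inequality to $\eta\langle M,P_\rho\rangle=\langle M,P_\text{LHV}\rangle-(1-\eta)\langle M,P_\sigma\rangle$ gives $\eta\abs{\langle M,P_\rho\rangle}\le L(M)+(1-\eta)L(M)=(2-\eta)L(M)$, and taking the supremum over measurements gives $Q(M,\rho)\le(2/\eta-1)L(M)$, i.e.\ $\text{LV}(M,\rho)\le 2/\eta-1$.

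The argument is short, so there is no deep obstacle; the two places that need care are (i) interpreting "admits an LHV model" as the statement that the behaviour is local for \emph{all} POVMs, so that the maximisation defining $\omega_q$ (resp.\ $Q$) can legitimately be performed after the inequality is established, and (ii) orienting the triangle inequality correctly in the Bell-functional case — one must bound $\eta\abs{\langle M,P_\rho\rangle}$ in terms of $\abs{\langle M,P_\text{LHV}\rangle}$ and $\abs{\langle M,P_\sigma\rangle}$, and not the other way around.
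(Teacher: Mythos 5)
Your proof is correct and follows essentially the same route as the paper's: decompose $\langle M,P_{\text{LHV}}\rangle=\eta\langle M,P_\rho\rangle+(1-\eta)\langle M,P_\sigma\rangle$, use positivity of the game coefficients to drop the $\sigma$ term in the first case, and use separability of $\sigma$ to bound it by $L(M)$ in the second. The only cosmetic difference is that you work with behaviours for a fixed measurement choice and take the supremum at the end, whereas the paper phrases the same computation in terms of an optimal Bell operator $\mathbf{G}$; the substance is identical.
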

	\begin{proof}
		For any nonlocal game $G$ and any set of local measurements one can always construct a Bell operator $\mathbf{G}$ such that the winning probability is given by
		\begin{align}
			\tr(\mathbf{G}\rho) &= \omega_q(G,\rho)\\
			&=\sum_{abxy}\mu(x,y)V(a,b,x,y)p_q(ab|xy).
		\end{align}
		Set the Bell operator $\mathbf{G}$ as the one which maximizes the quantum value of $\rho$ in $G$, \ie, $\tr(\mathbf{G}\rho)=\text{LV}(G,\rho)\omega_\ell(G)$. Since $\rho_\text{LHV}$ admits a LHV model, for every nonlocal game $G$ we have that 
\begin{equation}
\tr(\mathbf{G}\rho_\text{LHV})= \eta \tr(\mathbf{G}\rho) + (1-\eta) \tr(\mathbf{G}\sigma) \leq \omega_\ell(G).
\end{equation}
By dividing both sides of this inequality by $\eta\omega_\ell(G)$ and rearranging terms we obtain
	\begin{align}
	\text{LV}(G,\rho)&\leq \frac{1}{\eta} - \frac{(1-\eta)\tr(\mathbf{G}\sigma)}{\eta\omega_\ell(G)}; \\
		  &  \leq\frac{1}{\eta},
	\end{align}
	where the last inequality follows from the fact that $\tr(\mathbf{G}\sigma) \ge 0$.

	For the case where $M$ is a general Bell functional we define $\mathbf{M}$ as the Bell operator such that $\tr(\mathbf{M}\rho)=\text{LV}(M,\rho)L(M)$.

 Since $\rho_\text{LHV}$ admits a LHV model, for every Bell functional $M$ we have that 
\begin{equation}
\tr(\mathbf{M}\rho_\text{LHV})= \eta \tr(\mathbf{M}\rho) + (1-\eta) \tr(\mathbf{M}\sigma) \leq L(M).
\end{equation}
By dividing both sides of this inequality by $\eta L(M)$ and rearranging terms we obtain
	\begin{align}
	\text{LV}(M,\rho)&\leq \frac{1}{\eta} - \frac{(1-\eta)\tr(\mathbf{M}\sigma)}{\eta L(M)}; \\
	                 &\leq \frac{2}{\eta} - 1,
	\end{align}
	where the last inequality follows from the fact that $-\tr(\mathbf{M}\sigma)\leq L(M)$, which holds true because $\sigma$ is separable.
	\end{proof}

	We now extend the results on LHV models first presented in Ref.~\cite{almeida2007}.

	\begin{lemma}\label{lemma:mafalda}
		Let $\rho$ be a quantum state of dimension $d\times d$ and $\rho_B:=\tr_A(\rho)$. Then the state
	\begin{equation}
		\rho_{LHV}:=f \rho + (1-f) \frac{d\, I_d\otimes \rho_B - \rho } {d^2-1}
	\end{equation}
		admits a LHV model when 
\begin{equation}
\frac1f=\frac{d^2}{1+\left(\frac{d-1}{d} \right)^d (3d-1) },
\end{equation}
and for projective measurements it is enough to set 
\begin{equation}
\frac1f=\frac{d^2}{(d+1)H(d)-d}.
\end{equation}

	\end{lemma}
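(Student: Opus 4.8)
The plan is to exhibit an explicit local hidden variable model for $\rho_\text{LHV}$ and verify, by a direct computation, that it reproduces the correlations $p(ab|xy)=\tr[(A_a^x\otimes B_b^y)\rho_\text{LHV}]$ for all local measurements. The first step is a change of variables that makes the target transparent: setting $v:=\frac{fd^2-1}{d^2-1}$, one checks directly that
\begin{equation}
\rho_\text{LHV}=v\,\rho+(1-v)\,\frac{I_d}{d}\otimes\rho_B,
\end{equation}
so the target correlations split as $p(ab|xy)=v\,\tr[(A_a^x\otimes B_b^y)\rho]+(1-v)\,\frac{\tr A_a^x}{d}\,\tr[B_b^y\rho_B]$. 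The second term is produced by the product state $\frac{I_d}{d}\otimes\rho_B$ and is manifestly local, so the whole problem is to reproduce the $v$-weighted piece with shared randomness, and the largest admissible $v$ (equivalently, the largest admissible $f$) is exactly what has to be pinned down.

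For the model, the shared hidden variable is a Haar-random unit vector $\ket{\lambda}\in\mathbb{C}^d$ in Alice's space. On input $x$ Alice outputs $a$ with a probability $P_A(a|x,\lambda)$ depending only on the overlaps $\bra{\lambda}A_{a'}^x\ket{\lambda}$: for projective measurements the relevant choice is the deterministic arg-max rule (output the $a$ for which $\bra{\lambda}\Pi_a^x\ket{\lambda}$ is largest), and for general POVMs one decomposes the POVM into rank-one pieces, completes them into genuine projective measurements, applies the arg-max rule, and coarse-grains. Bob, who also sees $\lambda$, measures $\{B_b^y\}$ on an effective state $\sigma_\lambda$ built as an affine combination of the steered operator $\tr_A[(\proj{\lambda}\otimes I_d)\rho]$ and of $\rho_B$; one also allows an extra, possibly $\lambda$-dependent, ``disengage'' branch in which Alice answers with the maximally mixed statistics $\tr A_a^x/d$ and Bob measures $\rho_B$ directly. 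This is where the freedom in the noise term $\frac{I_d}{d}\otimes\rho_B$ is used: its Alice-marginal being maximally mixed is precisely what lets Alice ignore the ``wrong'' part of $\lambda$, and its presence is what makes room for a legitimate $\sigma_\lambda$.

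The verification then reduces to the Haar moments $\int d\lambda\,P_A(a|x,\lambda)$ and $\int d\lambda\,P_A(a|x,\lambda)\proj{\lambda}$. By unitary invariance of the Haar measure and Schur's lemma these are affine in $A_a^x$, $\tr A_a^x$ and $I_d$, so comparing the correlations produced by the model with the split above fixes the free weight of the model, and the requirement that this weight be at most one — equivalently, that $\sigma_\lambda\ge0$ for every $\ket{\lambda}$ — is what caps $f$ at the stated value. For projective measurements the only nontrivial ingredient is the integral of $\proj{\lambda}$ over the ``$a$ is the arg-max'' cell of $\mathbb{CP}^{d-1}$, which evaluates to an expression containing the harmonic number $H(d)$ and yields $1/f=\frac{d^2}{(d+1)H(d)-d}$; for general POVMs the rank-one completion contributes an additional combinatorial factor, and tracking it produces $1/f=\frac{d^2}{1+\left(\frac{d-1}{d}\right)^d(3d-1)}$. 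These are essentially the computations of Ref.~\cite{almeida2007}, re-derived here in a form adapted to the general noise term.

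The main obstacle is the construction and positivity of Bob's effective state $\sigma_\lambda$ for a general, not maximally entangled, $\rho$. For the isotropic state the steered operator $\tr_A[(\proj{\lambda}\otimes I_d)\rho]$ is a constant multiple of a pure projector, its trace $\bra{\lambda}\rho_A\ket{\lambda}$ does not depend on $\lambda$, and one essentially reads off the answer; in general this trace varies with $\lambda$, so the affine combination (together with the $\lambda$-dependent disengage branch) must be arranged so that the $\lambda$-dependence cancels and $\sigma_\lambda$ remains normalised and positive. Establishing this, together with carrying out the arg-max Haar integral and the POVM-to-projective reduction, is the bulk of the work.
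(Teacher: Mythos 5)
Your change of variables is a genuinely useful observation: writing $\rho_\text{LHV}=v\,\rho+(1-v)\,\frac{I_d}{d}\otimes\rho_B$ with $v=\frac{fd^2-1}{d^2-1}$ shows that the claimed values of $f$ correspond exactly to the known isotropic-state visibilities $v=\frac{(3d-1)(d-1)^{d-1}}{(d+1)d^d}$ (POVMs) and $v=\frac{H(d)-1}{d-1}$ (projective) of Ref.~\cite{almeida2007}, which is precisely the numerical content of the lemma. However, your proof has a genuine gap at its central step, and you say so yourself: the construction of Bob's effective states $\sigma_\lambda$ for a general, non-maximally-entangled $\rho$, together with their positivity, normalisation, and the verification that the model reproduces the correlations at the stated visibility, is exactly the hard part of the lemma, and it is deferred rather than carried out. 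Moreover, the specific ansatz you fix — a Haar-uniform hidden variable $\ket{\lambda}$ with Alice's response independent of $\rho$ — is unlikely to survive the generalisation as stated: when $\rho_A\neq I_d/d$ the correct model requires the distribution over $\lambda$ (or equivalently the weight of the hidden states) to be reweighted by $\lambda$-dependent factors such as $\bra{\lambda}\rho_A\ket{\lambda}$, which is the very $\lambda$-dependence you note must ``cancel'' but do not show how to handle. Re-deriving the Haar integrals and the POVM-to-projective reduction ``adapted to the general noise term'' is asserted, not proven.

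The paper avoids this entire difficulty with a transfer argument rather than a re-derivation. It takes the Almeida \emph{et al.} model for the isotropic state as a black box, re-parametrises it in terms of the fidelity $f$ (your same observation), and then uses the Schmidt decomposition to write any pure state as $\ket{\psi}=\sqrt{d}\,(I_d\otimes F)\ket{\phi_d}$ with $F^2=\tr_A\proj{\psi}$, i.e.\ as a local filter on Bob's side applied to the maximally entangled state. Because the isotropic model is in fact a local hidden \emph{state} model \cite{wiseman2007}, and LHS models are preserved under local filtering on the characterised side \cite{quintino2015}, the filtered state
\begin{equation}
d\left(I\otimes F\right)\phi^\eta_d\left(I\otimes F^\dagger\right)= f \proj{\psi} + (1-f)\,\frac{d\, I_d \otimes F^2-\proj{\psi}}{d^2-1}
\end{equation}
inherits an LHV model at the same $f$; convexity then extends the statement to mixed $\rho$. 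If you want to complete your direct construction, the filtering step tells you what the reweighting of $\lambda$ and the form of $\sigma_\lambda$ must be; but as written, your argument establishes the lemma only for the isotropic case already covered by Ref.~\cite{almeida2007}.
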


	\begin{proof}
	In Ref.~\cite{almeida2007}, the authors have presented a LHV model for the $d$-dimensional isotropic state. More precisely, let $\ket{\phi_d}:=\frac{1}{\sqrt{d}}\sum_{i=0}^{d-1}\ket{ii}$. For any local dimension $d$, the isotropic state 
	\begin{equation}	
		\phi_d^\eta := \eta \proj{\phi_d} + (1-\eta) \frac{I_d \otimes I_d}{d^2}
	\end{equation}
	admits a LHV model when $\eta=\frac{(3d-1)(d-1)^{d-1}}{(d+1)d^d}$, moreover, if we only consider projective measurements, we can take $\eta=\frac{H(d)-1}{d-1}$.

Simply by re-parametrising the isotropic state in terms of its fidelity with the maximally entangled state $f$ we have 
\begin{equation}
	\phi^\eta_d=f \proj{\phi_d} +(1-f) \left(\frac{I_d\otimes I_d -\proj{\phi_d}}{d^2-1}\right)
\end{equation}
 where $f=\eta+\frac{(1-\eta)}{d^2}.$ With that, we see that the local hidden state model for POVMs of Ref.~\cite{almeida2007} holds for 
\begin{equation}
\frac1f=\frac{d^2}{1+\left(\frac{d-1}{d} \right)^d (3d-1) },
\end{equation}
and for projective measurements for 
\begin{equation}
\frac1f=\frac{d^2}{(d+1)H(d)-d}.
\end{equation}
	The Schmidt decomposition states that, up to local unitary operations, every pure bipartite state can be written as $\ket{\psi}=\sum_{i=0}^{d-1}\lambda_i \ket{ii}$, with $\lambda_i \ge 0$. Hence, every bipartite pure state can be written as \mbox{$\ket{\psi} = \sqrt{d}I_d\otimes F\ket{\phi_d}$} where the operator $F:=\sum_i \lambda_i \ketbra{i}{i}$ may be seem as `local filtering operation'' on Bob's part. Note that $F^2 = \tr_A \ket{\psi}\bra{\psi}$.

	We now claim that the state 
\begin{multline}
	d\left( I\otimes F \right) \phi^\eta_d \left( I\otimes F^\dagger \right)= \\ f \ketbra{\psi}{\psi} + (1-f) \left(\frac{d I_d \otimes F^2-\ketbra{\psi}{\psi}}{d^2-1}\right)
\end{multline}
 has a LHV model. This hold true because the LHV model presented by Almeida \textit{et al} in Ref.~\cite{almeida2007} is also a local hidden state (LHS) model \cite{wiseman2007}. Also, when a local filter is applied in a state with a LHS model, the output state also admits a LHS model \cite{quintino2015}. 

	We finish the proof by pointing out that convex combinations of states with a LHV model also admits a LHV model, hence the state $\rho$ is not required to be pure.
\end{proof}

	We can now prove Theorem\,\ref{theo:best} by combining Lemma\,\ref{lemma:translation} and Lemma\,\ref{lemma:mafalda} and by noting that the state $\frac{d I_d\otimes \rho_B-\rho}{d^2-1}$ is separable for all states $\rho$. To see that, first note that it is enough to consider the case where $\rho$ is pure. Moreover, since local filtering cannot create entanglement, it is enough to consider the case where $\rho$ is the maximally entangled state. Note that the state $\frac{I_d\otimes I_d-\proj{\phi_d}}{d^2-1}$ is ``very close'' to the completely mixed state. Corollary 4 of Ref.~\cite{gurvits2002} formalises this intuition and proves that $\frac{I_d\otimes I_d-\proj{\phi_d}}{d^2-1}$ is separable.

\section{Optimal and unique solutions for the linear programming problem}\label{app:uniqueness}

Here we show the optimal and unique solutions of the linear programming problem \eqref{eq:lpoptimalgap} for unique games and the CGLMP inequalities. The key property of these Bell functionals is that for each individual setting adding no-signalling constraints can only increase the difference between the maximal and minimal coefficients. This implies that the unique and optimal solution is obtaining by adding zero no-signalling constraints.

To prove that, consider that each party has $k$ outcomes, and let $R$ be the $k\times k$ matrix encoding the coefficients of the original Bell functional for some fixed setting $x_0,y_0$. The most general transformation that can be effected on $R$ via the no-signalling constraints is to take it to
\begin{equation}
R' := R + \sum_{i=0}^{k-1} \gamma_i^A S_i^A+\gamma_i^B S_i^B,
\end{equation}
where the element $(a,b)$ of $S_i^A$ and $S_i^B$ is given by
\begin{equation}
S^A_{i,ab} := \delta_{ai}\mathand S^B_{i,ab} := \delta_{ib}.
\end{equation}
Note that these no-signalling constraints are not independent under translation \eqref{eq:translation}, as 
\begin{equation}
\sum_{i=0}^{k-1} S^A_{i} = \mathbf{1}^{(k)} = \sum_{i=0}^{k-1} S^B_{i},
\end{equation}
where $\mathbf{1}^{(k)}$ is the all-ones $k\times k$ matrix. We added this redundancy for symmetry; we shall remove it later by setting $\gamma^A_0 = \gamma^B_0 = 0$.

First we'll show that for unique games and the CGLMP inequalities setting $\gamma^A_i = \gamma^B_i = 0$ for all $i$ indeed minimizes the difference between the maximal and minimal coefficients of $R'$. That is, it is an optimal solution of the linear programming problem
\begin{equation}\label{eq:lprestricted}
\min_{\gamma_i^A,\gamma_i^B} \de{\max(R') - \min(R')}.
\end{equation}
For that, let's consider its dual:
\begin{equation}
\begin{gathered}
   \max_{p,q} \mean{R,p-q} \\
  \text{s.t.}\quad \forall i\ \mean{S^A_i,p-q} = 0,\ \mean{S^B_i,p-q} = 0, \\
\end{gathered}
\end{equation}
where $p,q$ are probability distribution over the $k \times k$ outcomes. Since the primal problem is strictly feasible, from strong duality\footnote{For a more in-depth introduction to convex optimization see Ref.~\cite{boyd2004}.} we know that for all $\gamma_i^A,\gamma_i^B,p,q$ it holds that 
\begin{equation}
\mean{R,p-q} \le \max(R') - \min(R')
\end{equation}
and moreover that 
\begin{equation}
\mean{R,p-q} = \max(R') - \min(R')
\end{equation}
for some optimal ${\gamma^A_i}, {\gamma^B_i}, p, q$. Therefore, to show that $\gamma^A_i = \gamma^B_i = 0$ for all $i$ is indeed an optimal solution we only need to exhibit $p,q$ satisfying the relevant constraints such that 
\begin{equation}\label{eq:strongduality}
\mean{R,p-q} = \max(R)-\min(R).
\end{equation}

For the case where the original Bell functional encodes a unique game, the elements of $R$ are given by
\begin{equation}
R_{ab} := [a=\sigma(b)]\max(R)
\end{equation}
for some permutation $\sigma$ (where $\max(R)$ is the probability that the referee asks questions $x_0,y_0$). We can set then
\begin{equation}
p(a,b) := \frac1k[a=\sigma(b)]
\end{equation}
and
\begin{equation}
q(a,b) := \frac1k[a=\sigma(b+1\!\!\!\mod k)].
\end{equation}
It is easy to see that
\begin{equation}
\forall i\ \mean{S^A_i,p} = \mean{S^A_i,q} = \mean{S^B_i,p} = \mean{S^B_i,q} = \frac1k,
\end{equation}
so the constraints are satisfied, and that equation \eqref{eq:strongduality} holds, so we have optimality.

For the case of the CGLMP inequalities, shown in equation \eqref{eq:predicatecglmp}, $R$ is given by
\begin{equation}
R_{ab} := \sum_{i=0}^{k-1} C_i [a-b = i \mod k]
\end{equation}
for some constants $C_i$. Let then $i_\text{max}, i_\text{min}$ be such that $\max(R) = C_{i_\text{max}}$ and $\min(R) = C_{i_\text{min}}$. An optimal solution of the dual problem will be
\begin{equation}
p(a,b) := \frac1k[a-b=i_\text{max} \mod k]
\end{equation}
and
\begin{equation}
q(a,b) := \frac1k[a-b=i_\text{min} \mod k],
\end{equation}
It is easy to check that they satisfy the constraints and that equation \eqref{eq:strongduality} holds, so as before we have optimality.

To show uniqueness, we need to show that 
\begin{equation}\label{eq:uniqueness}
\max(R') - \min(R') = \max(R)-\min(R)
\end{equation}
implies that $\gamma_i^A = \gamma_i^B = 0$ for all $i$.

For the case of unique games, $R'$ is given by
\begin{equation}
R'_{ab} = \gamma_a^A + \gamma_b^B + [a=\sigma(b)]\max(R),
\end{equation}
and the inequalities $\min(R') \le R'_{ab} \le \max(R')$ imply that for all $b$ we have
\begin{equation}
\gamma_{\sigma(b)}^A + \gamma_b^B \le \max(R') - \max(R)
\end{equation}
and
\begin{equation}
\min(R') \le \gamma_{\sigma(b+1\!\!\!\mod k)}^A + \gamma_b^B.
\end{equation}
Combining these inequalities to eliminate the variables $\gamma_b^B$, we end up with
\begin{equation}
\gamma_{\sigma(b)}^A \le \gamma_{\sigma(b+1\!\!\!\mod k)}^A + \max(R')-\min(R') - \max(R)
\end{equation}
for all $b$, so equation \eqref{eq:uniqueness} implies the chain of inequalities 
\begin{equation}
\gamma_{\sigma(0)}^A \le \gamma_{\sigma(1)}^A \le \ldots \le \gamma_{\sigma(k-1)}^A \le \gamma_{\sigma(0)}^A,
\end{equation}
which implies that $\gamma_i^A = \gamma_0^A = 0$ for all $i$, as claimed. To prove that $\gamma_i^B = \gamma_0^B = 0$ we just need to run the same argument for $b=\sigma^{-1}(a)$.

For the case of the CGLMP inequalities, $R'$ is given by
\begin{equation}
R'_{ab} = \gamma_a^A + \gamma_b^B + \sum_{i=0}^{k-1} C_i [a-b = i \mod k].
\end{equation}
Defining $i_\text{max}$ and $i_\text{min}$ as before, the inequalities $\min(R') \le R'_{ab} \le \max(R')$ imply that for all $b$ we have
\begin{equation}
\gamma_{b+i_\text{max}\!\!\! \mod k}^A + \gamma_b^B \le \max(R') - \max(R)
\end{equation}
and
\begin{equation}
\min(R') - \min(R) \le \gamma_{b+i_\text{min}\!\!\! \mod k}^A + \gamma_b^B.
\end{equation}
Combining these inequalities to eliminate $\gamma_b^B$ as before and assuming that equation \eqref{eq:uniqueness} holds, we end up with
\begin{equation}
\gamma_{b+i_\text{max}\!\!\! \mod k}^A \le \gamma_{b+i_\text{min}\!\!\! \mod k}^A 
\end{equation}
for all $b$. Since these indices are just permutations of $b$ the same argument as before applies and we are done. \qed
\end{document}